\numberwithin{equation}{section}
\newtheorem{theorem}{Theorem}[section]
\newtheorem{definition}{Definition}[section]
\newtheorem{proposition}{Proposition}[section]
\newtheorem{remark}{Remark}[section]
\newtheorem{example}{Example}[section]
\begin{document}

\title{Observable-geometric phases and quantum computation}

\author{Zeqian Chen}

\address{State Key Laboratory of Resonances and Atomic and Molecular Physics, Wuhan Institute of Physics and Mathematics, Chinese Academy of Sciences, 30 West District, Xiao-Hong-Shan, Wuhan 430071, China.}

\thanks{Key words: Geometric phase; Observable-geometric phase; Heisenberg equation; quantum connection, quantum computation.}

\date{}
\maketitle
\markboth{Zeqian Chen}%
{Quantum phase}

\begin{abstract}
This paper presents an alternative approach to geometric phases from the observable point of view. Precisely, we introduce the notion of observable-geometric phases, which is defined as a sequence of phases associated with a complete set of eigenstates of the observable. The observable-geometric phases are shown to be connected with the quantum geometry of the observable space evolving according to the Heisenberg equation. They are indeed distinct from Berry's phase \cite{Berry1984, Simon1983} as the system evolves adiabatically. It is shown that the observable-geometric phases can be used to realize a universal set of quantum gates in quantum computation. This scheme leads to the same gates as the Abelian geometric gates of Zhu and Wang \cite{ZW2002,ZW2003}, but based on the quantum geometry of the observable space beyond the state space.
\end{abstract}



\section{Introduction}\label{Intro}

The geometric phase can be used in quantum computing by implementing each of quantum gates in terms of geometric phases only \cite{ZR1999}. This scheme for computing is called geometric quantum computation. Geometric quantum computation involves adiabatic or nonadiabatic, as well as Abelian or non-Abelian characteristics of the underlying quantum evolution for the quantum state. It is shown in \cite{ZW2002,ZW2003} that the non-adiabatic Abelian geometric phases can be used to realize a universal set of quantum gates and thus are sufficient for universal all-geometric quantum computation. Furthermore, the non-adiabatic non-Abelian geometric phases are shown in \cite{STAHJS2012} to achieve universality as well. We refer to \cite{SMC2016} for a up-date review and background.

Here, we present an alternative approach to the geometric phase from the observable point of view. Precisely, we introduce the concept of observable-geometric phases, which is defined as a sequence of phases associated with a complete set of eigenstates of the observable. The observable-geometric phases are shown to be connected with the geometry of the observable space evolving according to the Heisenberg equation. In particular, the observable-geometric phases obtained are distinct from Berry's phase \cite{Berry1984, Simon1983} as the system evolves adiabatically. As application, the observable-geometric phases can be used to realize a universal set of quantum gates in quantum computation. The quantum gate based on the observable-geometric phases can be interpreted as the Abelian geometric (ZW) gates of Zhu and Wang \cite{ZW2002,ZW2003}; indeed the two schemes lead to the same gates. However, our gates are connected with a quantum geometric structure over the observable space induced by a complete measurement, distinct from that the ZW gates are related with classical (Riemannian) geometry of the state space of the system.

The paper is organized as follows. In Section \ref{ObGeoPhase}, we will define the notion of the observable-geometric phase using the propagator of a time-dependent Hamiltonian, which is shown to be independent of the choice of the Hamiltonian as long as the Heisenberg equations involving the Hamiltonian. Also, we show that the observable-geometric phase does not coincide with Berry's phase as the system evolves adiabatically. In Section \ref{Ex}, the observable-geometric phase of a qubit system is computed in details, and moreover, we will show how to make use of the Lewis-Resenfeld invariant theory \cite{LR1969} computing the observable-geometric phase of a general quantum system. As application, in Section \ref{GQC}, we show that the observable-geometric phases can be used to realize a universal set of quantum gates in quantum computation. We will give a summary in Section \ref{Conclusion}. Finally, we include an appendix as Section \ref{Appendix} on a quantum-geometric structure over the observable space consisting of all complete orthogonal decompositions.

For the sake of simplicity, we only consider finite quantum systems, namely the associated Hilbert spaces $\mathbb{H}$ have a finite dimension. This is sufficient for quantum computation. In what follows, we always denote by $\mathcal{B} (\mathbb{H})$ the algebra of all bounded operators on $\mathbb{H},$ by $\mathcal{O} (\mathbb{H})$ the set of all self-adjoint operators on $\mathbb{H},$ and by $\mathcal{U} (\mathbb{H})$ the group of all unitary operators on $\mathbb{H}.$ Without specified otherwise, the integer $d$ always denotes the dimension of $\mathbb{H},$ and $I$ the identity operator on $\mathbb{H}.$

\section{Observable-geometric phase}\label{ObGeoPhase}

Consider a finite quantum system with a time-dependent Hamiltonian $h(t).$ Since the associated Hilbert space $\mathbb{H}$ has finite dimension $d,$ each Hamiltonian $h(t)$ has a discrete spectrum as well as all observables considered in the following. Indeed, the spectrum of $h(t)$ at any given time will not be of importance. Instead, we shall consider the time evolution operator, as called propagator, generated by $h(t)$ (see \cite[Theorem X.69]{RS1980}). This is a two-parameter continuous family of unitary operators such that for any $t, r, s \in \mathbb{R},$
\begin{equation}\label{eq:Propagator}
U(t,t) = I,\quad U(t, r) U (r, s) = U(t,s),
\end{equation}
and
\begin{equation}\label{eq:SchrodingerEquPropagator}
\mathrm{i} \frac{d}{d t} U(t,s) = h(t) U(t,s),
\end{equation}
that is, for any $s \in \mathbb{R}$ and $\phi \in \mathbb{H},$ $\phi_s (t) = U (t,s) \phi$ is the unique solution of the time-dependent Schr\"{o}dinger equation
\begin{equation}\label{eq:SchrodingerEquTime}
\mathrm{i} \frac{d}{d t} \phi_s(t) = h(t) \phi_s (t)
\end{equation}
with $\phi_s (s) = \phi.$ Given any observable $X_0,$ namely a self-adjoint operator on $\mathbb{H},$ by \eqref{eq:SchrodingerEquPropagator} we conclude that $X(t) = U(0,t) X_0 U(t,0)$ is the unique solution of the time-dependent Heisenberg equation
\begin{equation}\label{eq:HeisenbergEquTime}
\mathrm{i} \frac{d X (t)}{d t} = [X (t), \tilde{h}(t)]
\end{equation}
with $X(0) = X_0,$ where $\tilde{h}(t) = U(0,t) h(t) U(t, 0).$ If there exists $T>0$ such that $X(T) = X(0),$ the time evolution of observable $(X(t): t \in \mathbb{R})$ is then called cyclic with period $T,$ and $X_0 = X(0)$ is said to be a cyclic observable.

Given an observable $X_0$ with non-degenerate eigenstates $\psi_n$ for all $1 \le n \le d,$ assume that $X(t) = U(0,t) X_0 U(t,0)$ is cyclic with period $T,$ namely $X(T) = X_0$ so that $U(0, T)\psi_n = e^{\mathrm{i} \theta_n}\psi_n$ with some $\theta_n \in [0, 2 \pi)$ for every $n =1, \ldots, d.$ Denoted by $\psi_n (t) = U(0,t) \psi_n$ for $1 \le n \le d,$ which are the eigenstates of $X(t),$ we conclude that $\psi_n (t)$ satisfies the skew (time-dependent) Schr\"{o}dinger equation
\begin{equation}\label{eq:SchrodingerEquTimeEigenstate}
\mathrm{i} \frac{d}{d t} \psi_n (t) = - \tilde{h}(t) \psi_n (t)
\end{equation}
with $\psi_n (0) = \psi_n,$ due to the fact that $U(0,t) = U(t,0)^{-1}.$  Define
\begin{equation}\label{eq:Parallelvect}
|\tilde{\psi}_n (t) \rangle = e^{ - \mathrm{i} \int^t_0 \langle \psi_n (s) | \tilde{h} (s) | \psi_n (s) \rangle ds } |\psi_n (t) \rangle
\end{equation}
and
\begin{equation}\label{eq:ClosedCurve}
|\bar{\psi}_n (t) \rangle =  e^{- \mathrm{i} \alpha_n (t)} |\psi_n (t) \rangle
\end{equation}
with $\alpha_n (t) \in [0, 2 \pi)$ being continuous in $t$ such that $\alpha_n (T) - \alpha_n (0) = \theta_n,$ i.e., $|\bar{\psi}_n (T) \rangle = |\bar{\psi}_n (0) \rangle$ for every $n \ge 1.$ Note that for any $n \ge 1,$
$$
|\tilde{\psi}_n (t) \rangle = e^{ - \mathrm{i} \int^t_0 \langle \psi_n (0) | h (s) | \psi_n (0) \rangle ds } |\psi_n (t) \rangle
$$
because $\tilde{h}(t) = U(0,t) h(t) U(t, 0),$ and $|\tilde{\psi}_n (T) \rangle = e^{\mathrm{i} \beta_n}|\psi_n (0) \rangle$ with
$$
\beta_n = \theta_n - \int^T_0 \langle \psi_n (0) | h(t) | \psi_n (0)\rangle d t.
$$

Now, from \eqref{eq:SchrodingerEquTimeEigenstate} we conclude
\begin{equation}\label{eq:ParallelCondVect}
\langle \tilde{\psi}_n (t) | \frac{d}{d t} |\tilde{\psi}_n (t) \rangle =0,
\end{equation}
and
\begin{equation}\label{eq:ObGP-ClosedCurve}
\beta_n = \int^T_0 \mathrm{i} \langle \bar{\psi}_n (t) | \frac{d}{d t} |\bar{\psi}_n (t) \rangle d t
\end{equation}
since $|\bar{\psi}_n (T) \rangle = |\bar{\psi}_n (0) \rangle.$ This leads to the notion of the geometric phase for observable as follows.

\begin{definition}\label{df:ObGeoPhase}
Using the above notations, we define the geometric phases of the periodic evolution of observable $X(t)$ by
\begin{equation}\label{eq:q-GeoPhase}
\beta_n = \theta_n - \int^T_0 \langle \psi_n (0) | h(t) | \psi_n (0)\rangle d t
\end{equation}
which is uniquely defined up to $2 \pi k$ ($k$ is integer) for every $n = 1, \ldots, d.$ We simply call $\beta_n$'s the observable-geometric phases.
\end{definition}

\begin{remark}\rm
\begin{enumerate}[{\rm 1)}]

\item Note that for every $n = 1, \ldots, d,$ $\theta_n = \arg \langle \psi_n (0)| U(T,0) | \psi_n (0)\rangle$ and
$$
\langle \psi_n (0) | h(t) | \psi_n (0)\rangle = \langle \psi_n (t) | \tilde{h}(t) | \psi_n (t) \rangle
$$
for all $t.$ Thus, the observable-geometric phases $\beta_n$'s are obtained by removing the dynamical part from the total phases of eigenstates of observable according to the evolution of the Heisenberg equation \eqref{eq:HeisenbergEquTime} or equivalent \eqref{eq:SchrodingerEquTimeEigenstate}. This is in spirit the same as the definition of the Aharonov-Anandan (AA) geometric phase \cite{AA1987}.

\item When some eigenvalues of the initial observable $X_0$ are degenerate as eigenstates, this would lead to the notion of non-Abelian observable-geometric phase as similar to the usual non-Abelian geometric phase (cf. \cite{Anandan1988,STAHJS2012}). This could be done as defining the geometric phases for mixed states as in \cite{SPEAEOV2000,STBCD2003}). We will discuss it elsewhere.

\end{enumerate}
\end{remark}

Although the observable-geometric phases in \eqref{eq:q-GeoPhase} are defined as similar to the AA geometric phase, the geometric interpretation is completely different. Indeed, $C_W: [0, T] \ni t \mapsto O(t) = \{ |\psi_n (t) \rangle \langle \psi_n (t)|: 1 \le n \le d \}$ defines a closed curve in the observable space $\mathcal{W} (\mathbb{H})$ which is a set of all complete orthogonal decompositions (see Section \ref{Appendix} for the notations). The equation \eqref{eq:ParallelCondVect} can be expressed as a condition (see \eqref{eq:CanonicalParallelCond} below) for quantum parallel transportation along $C_W.$ Therefore, the geometric interpretation of $\beta_n$'s defined in \eqref{eq:q-GeoPhase} is related with the geometric properties of $C_W$ determined by a quantum geometric structure over the observable space $\mathcal{W} (\mathbb{H}),$ distinct from that the AA geometric phase is associated with classical (Riemannian) geometry of the state space. We will give the details of this geometric interpretation in Section \ref{GeoInter}.

If the system evolves adiabatically (cf. \cite[Chapter 2]{BMKNZ2003}), $h(t)$ varies slowly with $h(t) |n(t)\rangle= E_n (t) |n(t)\rangle,$ for a complete set $\{|n(t)\rangle \},$ such that the state remains an eigenstate of $h(t)$ at all time $t$ with the same energy quantum number $n,$ namely the evolution operator
$$
U(t,0) \backsimeq \sum_n e^{\mathrm{i}\int^t_0 [ \langle n(s) | \mathrm{i} \frac{d}{d s} | n(s) \rangle - E_n (s)] ds} |n(t)\rangle \langle n(0)|
$$
to a good approximation, see \cite[(5)-(6)]{AA1987} or \cite[(2.37)-(2.39)]{BMKNZ2003} for the details. For an adiabatic cyclic evolution with period $T,$ i.e., $h(0) = h(T),$ such that all the eigenvalues of $h(0)$ are non-degenerate, the observable-geometric phases of the cyclic evolution of observable $X(t) = U(0,t)h(0)U(t,0)$ are
\begin{equation}\label{eq:q-GeoPhaseAdiaEvo}
\beta_n = - \int^T_0 [ \langle n(t) | \mathrm{i} \frac{d}{d t} | n(t) \rangle - E_n (t)] dt - \int^T_0 \langle n (0) | h(t) | n (0)\rangle d t
\end{equation}
because
$$
U(0,T)|n(0)\rangle = e^{-\mathrm{i}\int^T_0 [ \langle n(s) | \mathrm{i} \frac{d}{d s} | n(s) \rangle - E_n (s)] ds} |n(0)\rangle.
$$
In this case, in general, the observable-geometric phase $\beta_n$'s do not coincide with Berry's phase
\begin{equation}\label{eq:BerryGeoPhase}
\gamma_n = \int^T_0 \langle n(t) | \mathrm{i} \frac{d}{d t} | n(t) \rangle d t
\end{equation}
as define in \cite{Berry1984, Simon1983} (cf. \cite{AA1987}). But the observable-geometric phases $\beta_n$'s, defined by \eqref{eq:q-GeoPhase}, do not depend on any adiabatic evolution restriction, and can be defined for any period evolution of observable in a general quantum system, not just Hamiltonian $h(t)$ in an adiabatic cyclic system.

\section{Examples}\label{Ex}

For illustrating the observable-geometric phases, we first consider a qubit case, namely the Hilbert space $\mathbb{H} = \mathbb{C}^2$ and the observable space is $\mathcal{W} (\mathbb{C}^2).$ We then show how to make use of the Lewis-Resenfeld invariant theory \cite{LR1969} computing the observable-geometric phase of a general quantum system. In what follows, we will apply \eqref{eq:q-GeoPhase} to qubit systems, respectively subject to an orientated magnetic field and a rotating magnetic field.

Recall that the Pauli matrixes $\vec{\sigma} = (\sigma_x, \sigma_y, \sigma_z)$ are defined by
$$
\sigma_x = \left ( \begin{matrix} 0 & 1 \\
1 & 0
\end{matrix}\right ),\; \sigma_y = \left ( \begin{matrix} 0 & - \mathrm{i} \\
\mathrm{i} & 0
\end{matrix}\right ),\;\sigma_z = \left ( \begin{matrix} 1 & 0 \\
0 & -1
\end{matrix}\right ).
$$

\begin{example}\label{ex:qubit}\rm (An orientated magnetic field)\;
Consider that a spin-$\frac{1}{2}$ particle with a magnetic moment is in a homogeneous magnetic field $\vec{B} = (0,0,B)$ along the $z$ axis, whose Hamiltonian is $H = - \mu B \sigma_z/2$ where $\mu$ is the Bohr magneton. Given a spin observable $X_0$ with two non-degenerate eigenstates
\begin{equation}\label{eq:InitalObsQubit}
\psi_1 = \left ( \begin{matrix} \cos \frac{\phi}{2} \\
\sin \frac{\phi}{2}
\end{matrix}\right ),\;  \psi_2 = \left ( \begin{matrix} -\sin \frac{\phi}{2} \\
\cos \frac{\phi}{2}
\end{matrix}\right )
\end{equation}
in $\mathbb{C}^2,$ $X(t) = U(0,t) X_0 U(t,0)$ satisfies Eq.\eqref{eq:HeisenbergEquTime} with $\tilde{h} (t) = h (t)= - \mu B \sigma_z/2$ and $U(t,0) = e^{\mathrm{i} \mu B t \sigma_z/2}$ so that $\psi_n (t) = U(0,t) \psi_n$ ($n=1,2$), where
$$
\psi_1 (t) = e^{- \mathrm{i} \mu B t \sigma_z/2} \psi_1 = \left ( \begin{matrix} e^{- \mathrm{i} \mu B t/2} \cos \frac{\phi}{2} \\
e^{ \mathrm{i}\mu B t/2} \sin \frac{\phi}{2}
\end{matrix}\right ),\;  \psi_2 (t) = e^{- \mathrm{i}\mu B t \sigma_z/2} \psi_2 = \left ( \begin{matrix} - e^{- \mathrm{i}\mu B t/2} \sin \frac{\phi}{2} \\
e^{\mathrm{i}\mu B t/2} \cos \frac{\phi}{2}
\end{matrix}\right ),
$$
satisfies the skew Schr\"{o}dinger equation \eqref{eq:SchrodingerEquTimeEigenstate}, namely
$$
\mathrm{i} \frac{d \psi_n (t) }{d t} =\mu B \frac{\sigma_z}{2} \psi_n (t),\quad n=1,2.
$$
The evolution $X(t)$ is periodic with period $T = \frac{2\pi}{\mu B},$ namely $X(T) = X_0$ and precisely
$$
\psi_1 (T) = e^{\mathrm{i} \pi} \psi_1,\quad \psi_2 (T) = e^{\mathrm{i} \pi} \psi_2.
$$
By \eqref{eq:q-GeoPhase} we have
$$
\beta_1 = \pi (1 + \cos \phi ),\quad \beta_2 = \pi (1 - \cos \phi ),
$$
which are the geometric invariants of the curve $C_W:$
$$
(0, \frac{\pi}{\mu B}) \ni t \longmapsto \bigg \{ \left ( \begin{matrix}  \cos^2 \frac{\phi}{2} & \frac{1}{2} e^{- 2 \mathrm{i}\mu B t} \sin \phi \\
\frac{1}{2} e^{2 \mathrm{i}\mu B t} \sin \phi  & \sin^2 \frac{\phi}{2}
\end{matrix}\right ),\; \left ( \begin{matrix}  \sin^2 \frac{\phi}{2} & - \frac{1}{2} e^{- 2 \mathrm{i}\mu B t} \sin \phi \\
- \frac{1}{2} e^{2 \mathrm{i}\mu B t} \sin \phi  & \cos^2 \frac{\phi}{2}
\end{matrix}\right ) \bigg \}
$$
in $\mathcal{W} (\mathbb{C}^2).$

Indeed, taking a fixed point $O_0 = \{ |e_1\rangle \langle e_1|, |e_2\rangle \langle e_2| \} \in \mathcal{W} (\mathbb{C}^2),$ and letting
$$
\tilde{U} (t) = e^{\mathrm{i} t \frac{\mu B}{2}\cos \phi} | \psi_1(t) \rangle \langle e_1| + e^{- \mathrm{i} t \frac{\mu B}{2}\cos \phi} | \psi_2(t) \rangle \langle e_2|
$$
for $t \in (0, \frac{2\pi}{\mu B}),$ we have that $\tilde{C}_P: (0, \frac{\pi}{\mu B}) \ni t \longmapsto \tilde{U} (t)$ is the horizontal $O_0$-lift of $C_W$ with respect to $\check{\Omega}$ in the principal bundle $\xi_{O_0}$ (the definitions of the canonical connection $\check{\Omega}$ and the bundle $\xi_{O_0}$ refer to Section \ref{Appendix}) so that
$$
\tilde{U} (T) = e^{-\mathrm{i} \beta} |\psi_1 \rangle \langle e_1| + e^{\mathrm{i} \beta} |\psi_2 \rangle \langle e_2|
$$
with $\beta = \beta_2 =\pi (1 - \cos \phi )$ (because $\beta_1 = 2 \pi - \beta_2$) is the holonomy element associated with the connection $\check{\Omega}, C_W,$ and $U_0 = \sum^2_{n = 1}| \psi_n \rangle \langle e_n |$ in $\xi_{O_0}.$
\end{example}

\begin{example}\label{ex:qubit-rotatingfield}\rm (A rotating magnetic field)\;
Consider a spin-$\frac{1}{2}$ particle subject to a rotating background magnetic field
$$
\mathbf{B} (t) = (B_0 \cos \omega t, B_0 \sin \omega t, B_1)
$$
with a constant angular velocity $\omega,$ where $B_0$ and $B_1$ are constants, whose Hamiltonian is given by
$$
h (t)= - \frac{\mu}{2}\mathbf{B}(t) \cdot \vec{\sigma} = - \frac{1}{2} (\omega_0 \sigma_x \cos \omega t + \omega_0 \sigma_y \sin \omega t + \omega_1 \sigma_z)
$$
with $\omega_i = \mu B_i$ ($i=0,1$). Since
$$
\sigma_x \cos \omega t + \sigma_y \sin \omega t = e^{- \mathrm{i} \omega t \sigma_z/2} \sigma_x e^{\mathrm{i} \omega t \sigma_z/2 },
$$
the Hamiltonian takes the form $h(t) = e^{- \mathrm{i} \omega t \frac{\sigma_z}{2}} h_0 e^{\mathrm{i} \omega t \frac{\sigma_z}{2}}$ with
$$
h_0 = - \frac{1}{2} \omega_0 \sigma_x - \frac{1}{2}\omega_1 \sigma_z.
$$
If $\phi (t)$ satisfies the time-dependent Schr\"{o}dinger equation \eqref{eq:SchrodingerEquTime}, then $\psi (t) = e^{\mathrm{i} \omega t \frac{\sigma_z}{2}} \phi (t)$ satisfies the Schr\"{o}dinger equation
$$
\mathrm{i} \frac{d \psi (t)}{d t} = (h_0 - \frac{1}{2}\omega \sigma_z ) \psi (t).
$$
Because
$$
H = h_0 - \frac{1}{2}\omega \sigma_z = - \frac{1}{2} \omega_0 \sigma_x - \frac{1}{2}(\omega_1 + \omega) \sigma_z
$$
is time-independent, we have $\psi (t) = e^{-\mathrm{i} tH} \psi (0)$ and thus
$$
\phi (t) = e^{-\mathrm{i} \omega t \frac{\sigma_z}{2}} e^{-\mathrm{i} tH} \psi (0)
$$
so that $U(t,0) = e^{-\mathrm{i} \omega t \frac{\sigma_z}{2}} e^{-\mathrm{i} tH}$ is the propagator associated with $h(t).$

Having obtained the exact expression of the propagator $U(t,0)$ we can next identify the cyclic evolution of observable, namely the solutions which satisfy \eqref{eq:HeisenbergEquTime} with
$$
\tilde{h} (t) = U(0,t) h(t) U(t, 0) = e^{\mathrm{i} tH} h_0 e^{-\mathrm{i} tH}.
$$
Given a spin observable $X_0$ with non-degenerate eigenstates $\psi_1$ and $\psi_2,$ $X(t) = U(0,t) X_0 U(t,0)$ is by definition cyclic with period $T$ if and only if $\{\psi_1, \psi_2\}$ is a complete set of eigenstates of $U(0, T).$ The corresponding eigenvalues $e^{\mathrm{i} \theta_n}$ are the total phase factors, namely $U(0, T)\psi_n = e^{\mathrm{i} \theta_n}\psi_n.$ The corresponding observable-geometric phases are
\begin{equation}\label{eq:GeophaseQubit}
\beta_n = \theta_n + \frac{\omega_0}{2\omega} \langle \psi_n | \sigma_x | \psi_n \rangle \sin \omega T + \frac{\omega_0}{2\omega} \langle \psi_n | \sigma_y | \psi_n \rangle (1-\cos \omega T) + \frac{\omega_1}{2}\langle \psi_n | \sigma_z | \psi_n \rangle T
\end{equation}
for every $n =1, 2.$ As shown in Theorem \ref{thm:q-GeoPhase}, $\beta_1, \beta_2$ are two geometric invariants of the curve
$$
C_W: [0,T] \ni t \mapsto \{U(0,t) |\psi_n \rangle\langle \psi_n| U(t,0): n =1, 2\}
$$
in $\mathcal{W} (\mathbb{C}^2),$ and
$$
\tilde{U} (T) = e^{\mathrm{i} \beta_1} |\psi_1 \rangle \langle e_1| + e^{\mathrm{i} \beta_2} |\psi_2 \rangle \langle e_2|
$$
is the holonomy element associated with the connection $\check{\Omega}, C_W,$ and $U_0 = \sum^2_{n = 1}| \psi_n \rangle \langle e_n |$ in $\xi_{O_0},$ given an arbitrary point $O_0 = \{ |e_1\rangle \langle e_1|, |e_2\rangle \langle e_2| \}$ in $\mathcal{W} (\mathbb{C}^2).$

A special case is the period $T= 2 \pi/\omega,$ namely the same period as the Hamiltonian $h(t).$ In this case, cyclic evolutions are obtained by finding the simultaneous eigenvectors of the operator $U(0,2 \pi/\omega) = e^{\mathrm{i} \frac{2\pi}{\omega} H} e^{\mathrm{i} \pi \sigma_z} = -e^{\mathrm{i} \frac{2\pi}{\omega} H}.$ These eigenvectors are consequently also ones of the operator $H = - \frac{1}{2} \omega_0 \sigma_x - \frac{1}{2}(\omega_1 + \omega) \sigma_z,$ a complete set of whose eigenvectors is
$$
\psi_+ = \left ( \begin{matrix} \cos \frac{\phi}{2} \\
\sin \frac{\phi}{2}
\end{matrix}\right ),\;  \psi_- = \left ( \begin{matrix} -\sin \frac{\phi}{2} \\
\cos \frac{\phi}{2}
\end{matrix}\right ),
$$
namely $H \psi_\pm = \pm \frac{1}{2}\sqrt{\omega^2_0 + (\omega_1 + \omega)^2} \psi_\pm,$ where
$$
\phi = 2 \arctan \frac{\omega_0}{\omega_1 + \omega + \sqrt{\omega^2_0 + (\omega_1 + \omega)^2}}.
$$
Then, for the cyclic evolution $X(t)=U(0,t) X_0 U(t,0)$ starting from the initial observable $X_0$ with two non-degenerate eigenstates $\psi_\pm,$ the total phase factors are
$$
\theta_\pm = \pi \pm \frac{\pi}{\omega} \sqrt{\omega^2_0 + (\omega_1 + \omega)^2}
$$
because $U(0, 2 \pi/\omega) \psi_\pm = e^{\mathrm{i} \theta_\pm}\psi_\pm.$ Thus, by \eqref{eq:GeophaseQubit} the corresponding observable-geometric phases are
$$
\beta_\pm = \pi \pm \frac{\pi}{\omega} \Big ( \sqrt{\omega^2_0 + (\omega_1 + \omega)^2} + \omega_1 \cos \phi \Big ),
$$
which are two geometric invariants of the curve
$$
C_W: [0,T] \ni t \mapsto \{U(0,t) |\psi_+ \rangle\langle \psi_+| U(t,0), U(0,t) |\psi_- \rangle\langle \psi_-| U(t,0)\}
$$
in $\mathcal{W} (\mathbb{C}^2).$ Moreover,
\begin{equation}\label{eq:GeoPhaseQubitUnitaryOper}
\tilde{U} (T) = e^{\mathrm{i} \beta} |\psi_+ \rangle \langle e_1| + e^{-\mathrm{i} \beta} |\psi_- \rangle \langle e_2|
\end{equation}
with $\beta= \beta_+$ (because $\beta_- = 2\pi- \beta_+$) is the holonomy element associated with the connection $\check{\Omega}, C_W,$ and $U_0 = |\psi_+ \rangle \langle e_1| + |\psi_- \rangle \langle e_2|$ in $\xi_{O_0},$ given an arbitrary point $O_0 = \{ |e_1\rangle \langle e_1|, |e_2\rangle \langle e_2| \}$ in $\mathcal{W} (\mathbb{C}^2).$
\end{example}

Generally speaking, it is a difficult task for getting a explicit expression of the evolution operator of a quantum system with a time-dependent Hamiltonian. To this end, Lewis and Riesenfeld \cite{LR1969} developed a dynamical invariant theory for computing the evolution operator. Recall that a dynamical (time-dependent) observable $I(t)$ is said to be an invariant for the system with a time-dependent Hamiltonian $h(t),$ if it obeys the Liouville-von Neumann equation
$$
\mathrm{i} \frac{\partial I(t)}{\partial t} = [h(t), I(t)],
$$
namely its expectation value is a constant. It is shown in \cite{LR1969} that any eigenvalue of $I(t)$ is independent of time $t,$ and if $|\varphi(t) \rangle$ is a solution of Schr\"{o}dinger's equation \eqref{eq:SchrodingerEquTime}, so does $I(t)|\varphi (t)\rangle.$ Let $\{|\varphi_n (t) \rangle \}$ be a complete set of eigenstates of $I(t).$ Then the evolution operator
\begin{equation}\label{eq:EvoluOperInOb}
U(t,0) = \sum_n e^{\mathrm{i} \alpha_n(t)} |\varphi_n (t) \rangle \langle \varphi_n (0)|,
\end{equation}
where
\begin{equation}\label{eq:Lewis-Riesenfeld phase}
\alpha_n (t) = \int^t_0 \big [\langle \varphi_n (s)| \mathrm{i} \frac{d}{d s}|\varphi_n (s)\rangle - \langle \varphi_n (s)| h(s)|\varphi_n (s)\rangle \big ] d s
\end{equation}
which are called the Lewis-Riesenfeld phase (cf. \cite{GXQ1991,MKN1994,GWN2014}).

If the invariant $I(t)$ is $T$-periodic and the eigenvalues of $I(0)$ are non-degenerate, then $I(0)$ is a cyclic observable with period $T,$ such that
$$
U(0, T) |\varphi_n (0) \rangle = e^{-\mathrm{i} \alpha_n(T)} \langle \varphi_n (t)|\varphi_n (0) \rangle |\varphi_n (0) \rangle.
$$
Let us take a continuous, closed representative of this curve $|\varphi_n (t) \rangle,$ i.e., $|\varphi_n (0) \rangle = |\varphi_n (T) \rangle.$ Then, the observable-geometric phases of the cyclic evolution $X(t) = U(0,t) I(0) U(t,0)$ are
\begin{equation}\label{eq:q-GeoPhaLRphase}
\beta_n = - \alpha_n(T) - \int^T_0 \langle \varphi_n (0) |h(t)|\varphi_n (0) \rangle d t.
\end{equation}
In this case, in general, the observable-geometric phase $\beta_n$'s do not coincide with the AA phase
\begin{equation}\label{eq:BerryGeoPhase}
\gamma_n = \int^T_0 \langle \varphi_n (t) | \mathrm{i} \frac{d}{d t} | \varphi_n(t) \rangle d t
\end{equation}
as define in \cite{AA1987} (cf. \cite{GWN2014}).
In what follows, we apply the formula \eqref{eq:q-GeoPhaLRphase} to the qubit system.

\begin{example}\rm (The Lewis-Riesenfeld phase for a qubit)\;
Consider a qubit system with time-dependent Hamiltonian given by
$$
h (t)= \frac{1}{2} (\Omega_0 \sigma_x \cos \omega t + \Omega_0 \sigma_y \sin \omega t + \Omega_1 \sigma_z),
$$
where $\Omega_0, \Omega_1,$ and $\omega$ are all constants, see Example \ref{ex:qubit-rotatingfield} for the details. The corresponding dynamical invariant for this system is
\begin{equation}\label{eq:LRinvariantQubit}
I(t) = \Omega_0 \sigma_x \cos \omega t + \Omega_0 \sigma_y \sin \omega t + (\Omega_1 - \omega) \sigma_z
\end{equation}
(cf. \cite{GWN2014}), whose eigenvalues
$$
\lambda_\pm = \pm \sqrt{\Omega^2_0 + (\Omega_1 -\omega)^2}
$$
with the corresponding eigenstates
$$
\varphi_\pm (t) = \left ( \begin{matrix} \frac{e^{-\mathrm{i} \omega t} \Delta_\pm}{\sqrt{1 + \Delta^2_\pm}}\\
\frac{1}{\sqrt{1 + \Delta^2_\pm}}
\end{matrix}\right )
$$
where $\Delta_\pm = \frac{1}{\Omega_0} (\lambda_\pm + \Omega_1 - \omega).$ By direct computation, the corresponding Lewis-Riesenfeld phases of these vectors are
\begin{equation}\label{eq:LRPhaseQubit}
\alpha_\pm (t) = \frac{1}{2} (\omega \mp \lambda) t
\end{equation}
so that
$$
U(t,0) = e^{\frac{\mathrm{i}}{2} (\omega - \lambda) t} |\varphi_+ (t) \rangle \langle \varphi_+ (0)| + e^{\frac{\mathrm{i}}{2} (\omega + \lambda) t} |\varphi_- (t) \rangle \langle \varphi_- (0)|.
$$
Clearly, $I(t)$ is cyclic in time with period $T = \frac{2 \pi}{\omega},$ so does $h(t).$ By \eqref{eq:q-GeoPhaLRphase}, the observable-geometric phases of the cyclic evolution $X(t) = U(0,t) I(0) U(t,0)$ are
\begin{equation}\label{eq:q-GeoPhaLRphaseQubit}
\beta_\pm = \pi \pm \frac{\pi \lambda}{\omega} - \frac{\Omega_1 (\Delta^2_\pm -1)}{2 (1 + \Delta^2_\pm)}.
\end{equation}
\end{example}

\section{Geometric quantum computation}\label{GQC}

In this section, we shall show that the observable-geometric phases can be used to realize a set of universal quantum gates. As is well-known, for a universal set of quantum gates, we need to achieve two kinds of noncommutative $1$-qubit gates and one nontrivial $2$-qubit gate (cf. \cite{DBE1995, Lloyd1995}). Thus, we need to realize two noncommutative $1$-qubit gates and a nontrivial $2$-qubit gate as the geometric evolution operators on loops in the observable spaces of a qubit and $2$-qubits, respectively.

\begin{proposition}\label{prop:1-qubitGeoGate}\rm
The two $1$-qubit gates $e^{\mathrm{i}\pi |1\rangle \langle 1|}$ and $e^{\mathrm{i} \sigma_x}$ can be respectively realized as geometric quantum gates associated with the geometry of the observable space $\mathcal{W} (\mathbb{C}^2)$ of a qubit, which are two well-known gates constituting a universal set of quantum gates for a single-qubit system.
\end{proposition}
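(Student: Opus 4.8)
The plan is to extract from Definition~\ref{df:ObGeoPhase} and Theorem~\ref{thm:q-GeoPhase} the precise form of the unitaries implementable as observable-geometric holonomies on one qubit, and then to produce the two required gates by an explicit choice of initial observable and magnetic field. If a non-degenerate $X_0$ on $\mathbb{C}^2$ with orthonormal eigenbasis $\{\psi_1,\psi_2\}$ is carried around a loop $C_W$ in $\mathcal{W}(\mathbb{C}^2)$ by a Hamiltonian $h(t)$ of period $T$, then by Theorem~\ref{thm:q-GeoPhase} the holonomy element is $e^{\mathrm{i}\beta_1}|\psi_1\rangle\langle e_1|+e^{\mathrm{i}\beta_2}|\psi_2\rangle\langle e_2|$, and with the reference frame $e_n=\psi_n$ (that is, $O_0=O(0)$ and $U_0=I$) this is the operator $G=e^{\mathrm{i}\beta_1}|\psi_1\rangle\langle\psi_1|+e^{\mathrm{i}\beta_2}|\psi_2\rangle\langle\psi_2|$ on $\mathbb{C}^2$, diagonal in the eigenbasis of $X_0$ with the observable-geometric phases of \eqref{eq:q-GeoPhase} as its eigenphases. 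Moreover $\beta_1+\beta_2\equiv 0\pmod{2\pi}$, because $\sum_n\theta_n\equiv\int_0^T\mathrm{Tr}\,h(t)\,dt$ while $\sum_n\int_0^T\langle\psi_n|h(t)|\psi_n\rangle\,dt=\int_0^T\mathrm{Tr}\,h(t)\,dt$, so $G$ lies in $SU(2)$ and the target gates will be obtained up to a global phase (which is irrelevant for quantum computation). Hence it is enough to choose $X_0$ so that $\{\psi_1,\psi_2\}$ is the eigenbasis of the target gate and to design $h(t)$ so that $\beta_1,\beta_2$ are its eigenphases modulo $2\pi$.

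First I would realize $e^{\mathrm{i}\pi|1\rangle\langle 1|}=\mathrm{diag}(1,-1)$, which is diagonal in the computational basis with eigenphases $0$ and $\pi$ (equivalently $-\tfrac{\pi}{2}$ and $\tfrac{\pi}{2}$ up to phase). Take $X_0=\sigma_z$, so $\{\psi_1,\psi_2\}=\{|0\rangle,|1\rangle\}$, and subject the qubit to a constant magnetic field of magnitude $B$ making a fixed angle $\phi$ with the $z$-axis, run for the period $T=2\pi/\mu B$ so that $U(0,T)=-I$. This is the situation of Example~\ref{ex:qubit} with the roles of the field axis and the observable axis interchanged, and the computation there yields $\beta_1=\pi(1+\cos\phi)$ and $\beta_2=\pi(1-\cos\phi)$; geometrically $C_W$ is the circle swept by the precessing eigenstate, enclosing solid angle $2\pi(1-\cos\phi)$. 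Choosing $\cos\phi=\tfrac12$ gives $\beta_1=\tfrac{3\pi}{2}$, $\beta_2=\tfrac{\pi}{2}$, hence $G=\mathrm{diag}(-\mathrm{i},\mathrm{i})=-\mathrm{i}\,e^{\mathrm{i}\pi|1\rangle\langle 1|}$: the required gate, up to the global phase $-\mathrm{i}$.

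Next I would realize $e^{\mathrm{i}\sigma_x}$, whose eigenvectors are $|\pm\rangle=\tfrac{1}{\sqrt2}(|0\rangle\pm|1\rangle)$ with eigenphases $\pm 1$. Take $X_0=\sigma_x$ and use the rotating-field Hamiltonian of Example~\ref{ex:qubit-rotatingfield}, but with the longitudinal strength tuned to cancel the rotation frequency, $\omega_1=-\omega$. Then the effective time-independent Hamiltonian there becomes $H=-\tfrac12\omega_0\sigma_x$, which points along the $x$-axis and has $|\pm\rangle$ as its eigenstates, so $X_0$ is cyclic for the period $T=2\pi/\omega$; and specializing the phase formula of Example~\ref{ex:qubit-rotatingfield} to $\omega_1+\omega=0$ (hence $\cos\phi=0$) gives $\beta_\pm=\pi\pm\pi|\omega_0|/\omega$, so $G=-\bigl(\cos\gamma\,I+\mathrm{i}\sin\gamma\,\sigma_x\bigr)=-e^{\mathrm{i}\gamma\sigma_x}$ with $\gamma=\pi|\omega_0|/\omega$ (the sign of $\gamma$ fixed by the orientation of the transverse field). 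Taking $|\omega_0|/\omega=1/\pi$ gives $\gamma=1$ and $G=-e^{\mathrm{i}\sigma_x}$, that is $e^{\mathrm{i}\sigma_x}$ up to a sign; with an appropriate other tuning of $|\omega_0|/\omega$ one gets $\cos 1\,I+\mathrm{i}\sin 1\,\sigma_x=e^{\mathrm{i}\sigma_x}$ exactly. Finally, since $e^{\mathrm{i}\pi|1\rangle\langle 1|}$ and $e^{\mathrm{i}\sigma_x}$ are rotations about the two non-parallel axes $\hat z$ and $\hat x$, the latter through an angle that is an irrational multiple of $\pi$, they generate a dense subgroup of $SU(2)$; this is the well-known single-qubit universality recalled in \cite{DBE1995,Lloyd1995}.

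The step I expect to require the most care is the phase bookkeeping: one must check, in each of the two constructions, that after subtracting the dynamical integral in \eqref{eq:q-GeoPhase} the residual phases $\beta_n$ are exactly the solid-angle-type invariants of $C_W$ claimed, and that reading the holonomy element of Theorem~\ref{thm:q-GeoPhase} as an operator on $\mathbb{C}^2$ produces the stated gate only up to a global phase, rather than up to the larger unitary freedom latent in the choice of base point $O_0$ and frame $U_0$ in $\xi_{O_0}$. The determinant constraint $\beta_1+\beta_2\in 2\pi\mathbb{Z}$ already shows that $e^{\mathrm{i}\pi|1\rangle\langle 1|}$, having determinant $-1$, cannot be hit on the nose by such a $G$, so the global-phase qualification is genuinely needed and should be made explicit; granting it, the two explicit field configurations above finish the argument.
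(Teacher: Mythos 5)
There is a genuine gap: you identify the ``geometric quantum gate'' with the holonomy element $G=e^{\mathrm{i}\beta_1}|\psi_1\rangle\langle\psi_1|+e^{\mathrm{i}\beta_2}|\psi_2\rangle\langle\psi_2|$ and then only arrange the loop so that $\beta_1,\beta_2$ are the desired eigenphases, but you never show that any physical evolution actually applies $G$ to an input state. The proposition, in the context of Section~\ref{GQC}, requires the realized gate to be the evolution operator of the qubit, and in general that operator carries the \emph{total} phases $\theta_n$, not the geometric phases $\beta_n$. This is fatal for your first construction: with $X_0=\sigma_z$ and a constant field at angle $\phi$ run for one period $T=2\pi/\mu B$, the propagator is $U(T,0)=e^{\mathrm{i}\pi\,\hat n\cdot\vec\sigma}=-I$, a trivial gate; the unitary $\mathrm{diag}(-\mathrm{i},\mathrm{i})$ you want exists only as an abstract holonomy, because the dynamical phases $\mp\pi\cos\phi$ have not been removed. (Your second construction happens to survive because with $\omega_1=-\omega$ and $\psi_\pm$ the $\sigma_x$-eigenstates the dynamical integral $\int_0^T\langle\psi_\pm|h(t)|\psi_\pm\rangle\,dt$ vanishes, but you neither notice nor verify this, and you must also keep track of the convention $U(0,T)=U(T,0)^{-1}$ used in defining $\theta_n$, which flips signs in the exponent.)

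The missing idea is exactly the core of the paper's proof: a two-loop scheme in which the field is reversed on the second period, $\mathbf{B}(t+T)=-\mathbf{B}(T-t)$, hence $h(t+T)=-h(T-t)$, so that the dynamical phases of the two periods cancel, $\gamma^{(2,d)}_n=-\gamma^{(1,d)}_n$, while the evolution remains cyclic; then the \emph{physical} operator $U(2T,0)$ equals the purely geometric unitary $U_{\phi,\beta}$ of \eqref{eq:ParallelUnitaryOper1-bitgate}, and choosing $(\phi,\beta)=(0,\pi/2)$ and $(\pi/2,\pi/2)$ yields $\mathrm{i}\,e^{\mathrm{i}\pi|1\rangle\langle 1|}$ and $e^{\mathrm{i}\sigma_x}$. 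Your phase bookkeeping (the constraint $\beta_1+\beta_2\in 2\pi\mathbb{Z}$, the global-phase caveat, and the matching of eigenbases) is correct as far as it goes, but without a dynamical-phase-removal mechanism (field reversal, or at least an explicit verification that the dynamical term vanishes for the chosen loop) the claim that these evolutions ``realize'' the two gates is not established.
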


\begin{proof}
As in Example \ref{ex:qubit-rotatingfield}, a general time-dependent Hamiltonian for a qubit has only one term $h (t) = - \mu {\bf B} (t) \cdot \vec{\sigma}/2,$ where ${\bf B} (t) = (B_0 \cos \omega t, B_0 \sin \omega t, B_1)$ denotes the total magnetic field felt by the qubit, $B_0, B_1$ and $\omega$ are constants, and $\mu$ is the Bohr magneton. Given a spin observable $X_0$ with two non-degenerate eigenstates
$$
\psi_+ = \cos \frac{\phi}{2} | 0 \rangle + \sin \frac{\phi}{2} | 1 \rangle,\; \psi_- = - \sin \frac{\phi}{2} | 0 \rangle + \cos \frac{\phi}{2} | 1 \rangle,
$$
where $|0\rangle$ and $|1\rangle$ constitute the computational basis for the qubit, the observable evolution $t \longmapsto U(0,t) X_0 U(t,0)$ is periodic with period $T = 2\pi/\omega$ provided
$$
\phi = 2 \arctan \frac{\omega_0}{\omega_1 + \omega + \sqrt{\omega^2_0 + (\omega_1 + \omega)^2}}
$$
with $\omega_i = \mu B_i$ ($i=0,1$), where $U(t,0) = e^{-\mathrm{i} \omega t \frac{\sigma_z}{2}} e^{-\mathrm{i} tH}$ with $H= -[\omega_0 \sigma_x + (\omega_1 + \omega)\sigma_z]/2.$ This cyclic process determines the evolution operator $U (T, 0) = e^{-\mathrm{i} \theta} |\psi_+\rangle \langle \psi_+| + e^{\mathrm{i} \theta} |\psi_-\rangle \langle \psi_-|$ with $\theta = \pi + \frac{\pi}{\omega} \sqrt{\omega^2_0 + (\omega_1 + \omega)^2}.$ If we can remove the dynamical phase from the total phase $\theta,$ then the evolution operator becomes a geometric quantum gate. This can be done by using a two-loop method as in the case of the AA geometric phases (cf. \cite{ZW2002, ZW2003}).

To this end, we first allow the time-dependent Hamiltonian $h(t)$ to go through cyclic evolution with period $T= 2\pi/\omega.$ Precisely, we consider the process where a spin observable $X_0$ with two non-degenerate eigenstates $\psi_\pm$ can evolve cyclically. We first decide the cyclic evolution observable $X_0$ by choosing $\phi$ as above. The phases of eigenstates $\psi_\pm$ acquired in this way would contain both a geometric and a dynamical component as described in \eqref{eq:q-GeoPhase}. In order to remove the dynamical phase accumulated in the above process, we allow $X_0$ to evolve along the time-reversal path of the first-period loop during the second period. This process can be realized by reversing the effective magnetic field with $\mathbf{B} (t+T)= - \mathbf{B} (T-t)$ on the same loop of the first-period $[0, T),$ and thus $h(t+T) = - h(T-t)$ for $t \in [0, T).$ As a result, the total phases accumulated in the two periods will be just the observable-geometric phases, because the dynamical phase $\gamma^{(d)}_n = \int^T_0 \langle \psi_n | h(t) | \psi_n \rangle d t$ appearing in \eqref{eq:q-GeoPhase} will be canceled.

Indeed, the dynamical phase $\gamma^{(2,d)}_n$ for the second period is equal to that ($\gamma^{(1,d)}_n$) for the first period with the opposite sign, namely $\gamma^{(2,d)}_n = -\gamma^{(1,d)}_n,$ since
\begin{equation*}
\begin{split}
\gamma^{(2,d)}_n & = \int^{2T}_T \langle \psi_n | h(t) | \psi_n \rangle d t = \int^T_0 \langle \psi_n | h(t+T) | \psi_n \rangle d t\\
& = - \int^T_0 \langle \psi_n | h(T-t) | \psi_n \rangle d t = - \int^T_0 \langle \psi_n | h(t) | \psi_n \rangle d t = - \gamma^{(1,d)}_n.
\end{split}
\end{equation*}
On the other hand, the propagator for the second period is $U(t, 0) = e^{\mathrm{i} \omega (t-2T)\sigma_z/2} e^{\mathrm{i}t H}$ for $t \in (T, 2T].$ Thus, the observable-geometric phases are
$$
\beta_\pm = \mp \frac{\pi}{\omega} \sqrt{\omega^2_0 + (\omega_1 + \omega)^2},
$$
namely $U(2T,0) = e^{\mathrm{i} \beta} |\psi_+\rangle \langle \psi_+| + e^{-\mathrm{i} \beta} |\psi_-\rangle \langle \psi_-|$ with $\beta = \beta_-.$

Therefore, by removing the dynamical phases, we obtain the evolution operator $U(2T,0) = U_{\phi, \beta}$ with the matrix representation
\begin{equation}\label{eq:ParallelUnitaryOper1-bitgate}
U_{\phi, \beta} = \left ( \begin{matrix} e^{\mathrm{i} \beta} \cos^2 \frac{\phi}{2} + e^{- \mathrm{i} \beta} \sin^2 \frac{\phi}{2} & \mathrm{i} \sin \phi \sin \beta\\
\mathrm{i} \sin \phi \sin \beta & e^{\mathrm{i} \beta} \sin^2 \frac{\phi}{2} + e^{-\mathrm{i} \beta} \cos^2 \frac{\phi}{2}
\end{matrix}\right ).
\end{equation}
As noted above, $U_{\phi, \beta}$ depends only on the geometric property of the curve $C_W$ and $\phi,$ and thus is a geometric quantum gate. We write an input state as $| \psi_\mathrm{in} \rangle = \alpha_+ | 0 \rangle + \alpha_- | 1 \rangle$ with $\alpha_\pm = \langle \psi_\pm (0) | \psi_\mathrm{in} \rangle.$ Then $U_{\phi, \beta}$ is a single-qubit gate such that $| \psi_\mathrm{out} \rangle = U_{\phi, \beta} | \psi_\mathrm{in} \rangle.$

By computing, two operations $U_{\phi, \beta}$ and $U_{\phi', \beta'}$ do not commute if and only if $\phi \not= \phi' + k \pi$ and $\beta, \beta' \not= k \pi,$ where $k$ is an integer. By choosing $B_0, B_1$ and $\omega$ so that $\phi =0, \phi' = \frac{\pi}{2},$ and $\beta = \beta' = \frac{\pi}{2},$ we get the two single-qubit gates
$$
U_{0, \pi/2} = \mathrm{i} e^{\mathrm{i} \pi |1\rangle \langle 1|},\quad U_{\frac{\pi}{2}, \frac{\pi}{2}} = e^{\mathrm{i} \sigma_x},
$$
This completes the proof.
\end{proof}

Next, we turn to a nontrivial $2$-qubit gate, namely the controlled-NOT gate (c-NOT), which is defined as $|k \rangle |m\rangle \mapsto |k\rangle |k \oplus m \rangle$ for $k,m =0,1,$ where $\oplus$ denotes the addition modulo $2.$

\begin{proposition}\label{prop:2-qubitGeoGate}\rm
The controlled-NOT gate can be realized as a geometric quantum gate associated with the geometry of the observable space $\mathcal{W} (\mathbb{C}^2 \otimes \mathbb{C}^2)$ of the two-qubits system.
\end{proposition}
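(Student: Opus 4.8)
The plan is to imitate the single-qubit construction of Proposition~\ref{prop:1-qubitGeoGate}, using a \emph{conditional} rotating-field Hamiltonian for the two qubits and then correcting a residual relative phase by a single-qubit geometric gate. Take
\begin{equation*}
h(t) = |0\rangle\langle 0|\otimes h_0(t) + |1\rangle\langle 1|\otimes h_1(t),
\end{equation*}
where $|0\rangle\langle 0|,\,|1\rangle\langle 1|$ act on the control qubit and each $h_k(t) = -\tfrac{\mu}{2}\mathbf{B}^{(k)}(t)\cdot\vec\sigma$ is a rotating-field Hamiltonian on the target qubit of the type in Example~\ref{ex:qubit-rotatingfield}, all with the same angular velocity $\omega$; physically this is an Ising-coupled two-spin system in which the field felt by the target depends on the control. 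Because $|k\rangle\langle k|$ commutes with $h(t)$, the propagator is block-diagonal, $U(t,0) = |0\rangle\langle 0|\otimes U_0(t,0) + |1\rangle\langle 1|\otimes U_1(t,0)$ with $U_k(t,0) = e^{-\mathrm{i}\omega t\sigma_z/2}e^{-\mathrm{i}tH_k}$ and $H_k = -[\omega_0^{(k)}\sigma_x + (\omega_1^{(k)}+\omega)\sigma_z]/2$. I would pick the cyclic observable $X_0$ on $\mathbb{C}^2\otimes\mathbb{C}^2$ to have the four non-degenerate eigenstates $|k\rangle\otimes\psi_\pm^{(k)}$, where $\psi_\pm^{(k)}$ are the eigenvectors of $H_k$ with the angle $\phi_k$ as in Proposition~\ref{prop:1-qubitGeoGate}; applying the computation of Example~\ref{ex:qubit-rotatingfield} in each block shows $X(t) = U(0,t)X_0U(t,0)$ is cyclic with period $T = 2\pi/\omega$, with associated loop $C_W : t\mapsto\{U(0,t)|\chi\rangle\langle\chi|U(t,0) : \chi = |k\rangle\otimes\psi_\pm^{(k)},\ k=0,1\}$ in $\mathcal{W}(\mathbb{C}^2\otimes\mathbb{C}^2)$.

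Next I would run the same two-loop trick as in Proposition~\ref{prop:1-qubitGeoGate}: during the second period reverse the field block by block, $\mathbf{B}^{(k)}(t+T) = -\mathbf{B}^{(k)}(T-t)$, so $h(t+T) = -h(T-t)$ and the dynamical phase of every one of the four eigenstates is cancelled, the cancellation argument being block-diagonal and hence identical to the one-qubit case. The resulting purely geometric evolution operator is
\begin{equation*}
U(2T,0) = |0\rangle\langle 0|\otimes U_{\phi_0,\beta_0} + |1\rangle\langle 1|\otimes U_{\phi_1,\beta_1},
\end{equation*}
with $U_{\phi_k,\beta_k}$ the matrix \eqref{eq:ParallelUnitaryOper1-bitgate} and $\beta_k$ the corresponding observable-geometric phase. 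I would then tune the two sets of field constants so that the control-$0$ block is the identity, $U_{\phi_0,\beta_0} = I$ (for instance $\omega_0^{(0)}=0$ and $\omega_1^{(0)}=\omega$, giving $\phi_0=0$), while the control-$1$ block gives the NOT gate up to a phase, $U_{\phi_1,\beta_1} = \mathrm{i}\sigma_x = e^{\mathrm{i}\pi\sigma_x/2}$, via $\phi_1=\pi/2$, $\beta_1=\pi/2$ exactly as in Proposition~\ref{prop:1-qubitGeoGate}. This produces $U(2T,0) = |0\rangle\langle 0|\otimes I + |1\rangle\langle 1|\otimes\mathrm{i}\sigma_x$.

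Finally, I would remove the leftover factor $\mathrm{i}$ on the control-$1$ subspace. This step is unavoidable: the operator above lies in $SU(2)\oplus SU(2)\subset SU(4)$, whereas the c-NOT gate has determinant $-1$, so no single conditional cyclic process of this form can equal c-NOT even up to a global phase. Composing on the control qubit with the single-qubit geometric phase gate $U_{0,\pi/4}=e^{\mathrm{i}\pi\sigma_z/4}$ of Proposition~\ref{prop:1-qubitGeoGate} --- realized inside the two-qubit system as $U_{0,\pi/4}\otimes I$ from the Hamiltonian $h'(t)\otimes I$ and a product cyclic observable $X_0'\otimes Y_0'$ whose eigenvalues are taken generic so that its spectrum stays non-degenerate --- turns $U(2T,0)$ into $e^{\mathrm{i}\theta}\big(|0\rangle\langle 0|\otimes I + |1\rangle\langle 1|\otimes\sigma_x\big)$, i.e. the c-NOT gate up to an irrelevant global phase. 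Since the overall evolution is a concatenation of cyclic observable loops in $\mathcal{W}(\mathbb{C}^2\otimes\mathbb{C}^2)$ from each of which the dynamical phase has been removed, it is a geometric quantum gate in the sense of this section, which is what we want. The steps needing the most care are the blockwise check that the time-reversed second period simultaneously kills all four dynamical phases while leaving the two independent geometric phases $\beta_0,\beta_1$ free, and the verification that the phase-correction loop is itself a bona fide two-qubit geometric gate (in particular that $X_0'\otimes Y_0'$ can be chosen with non-degenerate spectrum); the determinant remark is what shows the second loop is genuinely required.
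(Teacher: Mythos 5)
Your construction is essentially the paper's: the same block-diagonal conditional Hamiltonian (a trivial $h_0$ block and the rotating-field $h_1$ of Proposition \ref{prop:1-qubitGeoGate}), the same four product eigenstates defining the cyclic two-qubit observable, the same blockwise two-loop cancellation of the dynamical phases, and the same resulting gate $|0\rangle\langle 0|\otimes I+|1\rangle\langle 1|\otimes\mathrm{i}\sigma_x$. Where you diverge is the residual factor $\mathrm{i}$: the paper simply declares this gate equivalent to c-NOT ``up to an overall phase factor $\mathrm{i}$ for the target qubit'' and stops, whereas you correctly observe that this $\mathrm{i}$ is a conditional rather than global phase and remove it by composing with the control-qubit geometric phase gate $e^{\mathrm{i}\pi\sigma_z/4}\otimes I$ --- a legitimate and arguably more careful finish, at the cost of needing a second geometric loop. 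One small quibble: your determinant aside is too quick as stated, since $e^{\mathrm{i}\pi/4}\,\mathrm{CNOT}$ does lie in $SU(4)$; the real obstruction is blockwise, namely that each $U_{\phi,\beta}$ has determinant $1$ while matching $e^{\mathrm{i}\theta}\mathrm{CNOT}$ block by block would force the target block to be $\pm\sigma_x$, whose determinant is $-1$.
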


\begin{proof}
Consider a $2$-qubit quantum system with a time-dependent Hamiltonian $h(t),$ where the first qubit is the control qubit while the second the target one. Suppose $h(t)$ be of the form
$$
h(t) = \left ( \begin{matrix} h_0 (t) & 0\\
0 & h_1 (t)
\end{matrix}\right ).
$$
This is the case when the control qubit is far away from the resonance condition for the operation of the target qubit. Given an initial $2$-qubit observable $X_0$ with four non-degenerate eigenstates:
$$\psi^{(0)}_+ = |0 \rangle \otimes \big ( \cos \frac{\phi^{(0)}}{2} |0 \rangle + \sin \frac{\phi^{(0)}}{2} |1 \rangle \big ), \psi^{(0)}_- = |0 \rangle \otimes \big ( - \sin \frac{\phi^{(0)}}{2} |0 \rangle + \cos \frac{\phi^{(0)}}{2} |1 \rangle \big ),$$
and
$$\psi^{(1)}_+ = |1 \rangle \otimes \big ( \cos \frac{\phi^{(1)}}{2} |0 \rangle + \sin \frac{\phi^{(1)}}{2} |1 \rangle \big ), \psi^{(1)}_- = |1 \rangle \otimes \big ( - \sin \frac{\phi^{(1)}}{2} |0 \rangle + \cos \frac{\phi^{(1)}}{2} |1 \rangle \big ),$$
where $\{ |00 \rangle, |01 \rangle, |10 \rangle, |11 \rangle \}$ is the computational basis, the evolution $t \longmapsto U(0,t) X_0 U(t,0)$ is cyclic with period $T$ under the suitable choices of $\phi^{(0)}$ and $\phi^{(1)}$ so that $\psi^{(k)}_\pm$ are eigenstates of $U(T,0)$ for $k=0,1.$ In this case, the evolution operator to describe the two-qubit gate is given by
\begin{equation}\label{eq:ParallelUnitaryOper2-bitgate}
U (\phi^{(0)}, \beta^{(0)}; \phi^{(1)}, \beta^{(1)}) = \left ( \begin{matrix} U_{\phi^{(0)}, \beta^{(0)}} & 0\\
0 & U_{\phi^{(1)}, \beta^{(1)}}
\end{matrix}\right ),
\end{equation}
where $\beta^{(k)}$ ($\phi^{(k)}$) denotes the geometric phase (the cyclic initial observable) of the target bit when the state of the control qubit corresponds to $|k \rangle$ with $k=0,1,$ respectively.

To achieve the controlled-NOT gate, we take $h_0 =0$ and $h_1 (t) =- \mu {\bf B} (t) \cdot \vec{\sigma}/2$ with ${\bf B} (t) = (B_0 \cos \omega t, B_0 \sin \omega t, B_1)$ as in Proposition \ref{prop:1-qubitGeoGate}. Then the gate in this case with $T=2\pi/\omega$ is given by
$$
U (\frac{\pi}{2}, 0; \phi, \beta) = \left ( \begin{matrix} I & 0\\
0 & U_{\phi, \beta}
\end{matrix}\right ),
$$
where $\beta$ is the total phase accumulated in the evolution when the controlled qubit is in the state $|1\rangle,$ and $\phi$ is chose as in Proposition \ref{prop:1-qubitGeoGate}. By a two-loop method as above, we can remove the dynamical phase and obtain the geometric quantum gate
$$
U \big (\frac{\pi}{2}, 0; \frac{\pi}{2}, \frac{\pi}{2} \big ) =  \left ( \begin{matrix} I & 0\\
0 & \mathrm{i} \sigma_x
\end{matrix}\right ).
$$
This gate is equivalent to the controlled-NOT gate, up to an overall phase factor $\mathrm{i}$ for the target qubit. Therefore, we get a nontrivial $2$-qubit gate in terms of the observable-geometric phases.
\end{proof}

\begin{remark}\rm
\begin{enumerate}[\rm (1)]

\item Universal quantum computing has been achieved by Zhu and Wang \cite{ZW2002,ZW2003} using the gates associated with the AA geometric phase, as well by Sj\"{o}qvist {\it et al} \cite{STAHJS2012} using the non-Abelian geometric phase (cf. \cite{Anandan1988}). As shown in \cite{SMC2016}, the non-Abelian geometric gates of Sj\"{o}qvist {\it et al} can be interpreted as the Abelian geometric gates of Zhu and Wang.

\item The quantum gate based on the observable-geometric phases can be also interpreted as the Abelian geometric gates of Zhu and Wang; indeed the two schemes lead to the same gates as shown in Propositions \ref{prop:1-qubitGeoGate} and \ref{prop:2-qubitGeoGate}. However, there are two differences between them: 1)\; the ZW gates are related with the geometry of the state space of the system, while ours are connected with the geometry of the observable space; 2)\; the gates in the ZW scheme are all based at the simultaneous evolutions of two orthogonal basic vectors, while in our setting all gates are based at the evolution of a single observable.

\end{enumerate}
\end{remark}

\section{Conclusions}\label{Conclusion}

Geometric phases provide a new way of looking at quantum mechanics. The usual theory of the geometric phase (cf. \cite{AA1987, BMKNZ2003}) is based on the Schr\"{o}dinger picture, that is, the geometric phase is defined for the state. Here, we define the observable-geometric phases in the Heisenberg picture. This provide a new way of studying the geometry for the quantum system from the viewpoint of the observable. In particular, the observable-geometric phases can be used to realize a universal set of quantum gates in quantum computation. Therefore, it may not be unreasonable to hope that this new insight may have heuristic value.

\section{Appendix: Quantum geometry over the observable space}\label{Appendix}

\subsection{Observable space}\label{ObservableSpace}

Recall that a complete orthonormal decomposition in $\mathbb{H}$ is a set of projections of rank one $\{ | n \rangle \langle n|: n \ge 1 \}$ satisfying
\begin{equation}\label{eq:OrthDecomp}
\sum_{n \ge 1} | n \rangle \langle n| = I,\quad \langle n|m \rangle = \delta_{n m}.
\end{equation}
We denote by $\mathcal{W} (\mathbb{H})$ the set of all complete orthonormal decompositions in $\mathbb{H}.$ Note that a complete orthonormal decomposition $O= \{ | n \rangle \langle n|: n \ge 1 \}$ determines uniquely a basis $\{| n \rangle \}_{n \ge 1}$ up to phases for basic vectors. Conversely, a basis uniquely defines a complete orthonormal decomposition in $\mathbb{H}.$ Therefore, a complete orthonormal decomposition can be regarded as a complete measurement. Since every observable has a spectral decomposition corresponding to (at least) a complete orthonormal decomposition, the evolution of a quantum system by the Heisenberg equation
\begin{equation}\label{eq:HeisenbergEqu}
\mathrm{i} \frac{d X}{d t} = [X, H]
\end{equation}
for the observable $X,$ gives rise to a curve in  $\mathcal{W} (\mathbb{H}).$ This is the reason why $\mathcal{W} (\mathbb{H})$ can be regarded as the observable space, whose geometry and topology induce geometric and topological phases for quantum systems.

For $O, O'\in \mathcal{W} (\mathbb{H}),$ we define the distance $D_\mathcal{W} ( O, O')$ by
\begin{equation}\label{eq:HausdDist}
D_\mathcal{W} ( O, O') = \inf \{ \| I - U \|:\; U^{-1} O' U = O,\; U \in \mathcal{U} (\mathbb{H}) \}.
\end{equation}
It is easy to check that $\mathcal{W} (\mathbb{H})$ is a complete metric space together with the distance $D_\mathcal{W}.$ Thus, $\mathcal{W} (\mathbb{H})$ is a topological space as a metric space.

To obtain more information on $\mathcal{W} (\mathbb{H}),$ we define $\mathcal{X} (\mathbb{H})$ to be the set of all ordered sequences $(|n\rangle \langle n |)_{n \ge 1},$ where $\{|n\rangle \langle n |:\; n \ge 1\}$'s are all complete orthonormal decompositions in $\mathbb{H}.$ A distance on $\mathcal{X} (\mathbb{H})$ is defined as follows: For $(|n\rangle \langle n |)_{n \ge 1}, (|n'\rangle \langle n' |)_{n \ge 1} \in \mathcal{X} (\mathbb{H}),$
$$
D_\mathcal{X} ((|n\rangle \langle n |)_{n \ge 1}, (|n'\rangle \langle n' |)_{n \ge 1}) = \inf \{ \| I-U \|:\; U \in \mathcal{U} (\mathbb{H}),\; |n'\rangle = U |n \rangle, \forall n \}.
$$
Then it is easy to check that $D_\mathcal{X}$ is a distance such that $\mathcal{X} (\mathbb{H})$ is a complete metric space and
$$
\mathcal{W} (\mathbb{H}) \cong \frac{\mathcal{X} (\mathbb{H})}{\Pi (d)},
$$
where $\Pi (d)$ denotes the permutation group of $d$ objects, and the notation $\cong$ indicates the isometric isomorphism of two metric spaces. From this identification, we find that $\mathcal{W} (\mathbb{H})$ is topologically non-trivial as its fundamental group is isomorphic to $\Pi (d),$ since $\mathcal{X} (\mathbb{H})$ can be considered as a compact, simply connected manifold.

\begin{proposition}\label{prop:TopoSpaceQ-system}\rm
For an arbitrary fixed basis $(|e_n\rangle )_{n \ge 1}$ of $\mathbb{H},$
$$
\mathcal{W} (\mathbb{H}) \cong \{ \mathcal{G} (U):\; U \in \mathcal{U} (\mathbb{H})\}
$$
with
\begin{equation}\label{eq:FiberForm}
\mathcal{G} (U) = \Big \{ \sum_{n \ge 1} e^{ \mathrm{i} \theta_n} |\sigma (n) \rangle \langle e_n |:\; \forall \sigma \in \Pi (d), \forall \theta_n \in [0, 2 \pi ) \Big \},
\end{equation}
where $|n\rangle = U |e_n\rangle$ for any $n \ge 1,$ and the distance between two elements is defined by
$$
d (\mathcal{G} (U), \mathcal{G} (U')) = \inf \{ \| K - G \|: K \in \mathcal{G} (U), G \in \mathcal{G} (U') \}.
$$
\end{proposition}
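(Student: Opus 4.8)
The plan is to identify both $\mathcal{W}(\mathbb{H})$ and $\{\mathcal{G}(U)\colon U\in\mathcal{U}(\mathbb{H})\}$ with the coset space of $\mathcal{U}(\mathbb{H})$ by the subgroup $\mathcal{G}(I)$ of monomial (generalized permutation) unitaries, i.e. those of the form $M=\sum_n e^{\mathrm{i}\theta_n}|e_{\sigma(n)}\rangle\langle e_n|$ with $\sigma\in\Pi(d)$. First I would introduce the map $\Phi\colon\mathcal{U}(\mathbb{H})\to\mathcal{W}(\mathbb{H})$, $\Phi(U)=O_U:=\{\,U|e_n\rangle\langle e_n|U^{*}\colon n\ge1\,\}$; it is surjective since any $O=\{|n\rangle\langle n|\}\in\mathcal{W}(\mathbb{H})$ equals $O_U$ for the unitary $U$ sending $|e_n\rangle$ to $|n\rangle$. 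The one nontrivial ingredient is a rigidity statement: a unitary $W$ satisfies $\{W|e_n\rangle\langle e_n|W^{*}\colon n\}=\{|e_n\rangle\langle e_n|\colon n\}$ if and only if $W\in\mathcal{G}(I)$, because $W|e_n\rangle\langle e_n|W^{*}$ is the rank-one projection onto $W|e_n\rangle$, so set equality forces $W|e_n\rangle=e^{\mathrm{i}\theta_n}|e_{\pi(n)}\rangle$ for some permutation $\pi$. Granting this, $\Phi(U)=\Phi(U')$ holds precisely when $(U')^{-1}U\in\mathcal{G}(I)$, i.e. when $U\mathcal{G}(I)=U'\mathcal{G}(I)$; and the computation $\sum_n e^{\mathrm{i}\theta_n}|\sigma(n)\rangle\langle e_n|=U\big(\sum_n e^{\mathrm{i}\theta_n}|e_{\sigma(n)}\rangle\langle e_n|\big)$ shows $\mathcal{G}(U)=U\mathcal{G}(I)$. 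Hence $\Phi$ descends to a well-defined bijection $\bar\Phi\colon\{\mathcal{G}(U)\colon U\in\mathcal{U}(\mathbb{H})\}\to\mathcal{W}(\mathbb{H})$, $\mathcal{G}(U)\mapsto O_U$.

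It then remains to check that $\bar\Phi$ preserves distances, and I would do this by showing both sides equal the distance, in operator norm, from $(U')^{-1}U$ to $\mathcal{G}(I)$. For $D_\mathcal{W}$: conjugating the defining condition $V^{-1}O_{U'}V=O_U$ by $U^{-1}(\cdot)U$ and applying the rigidity statement shows that the unitaries $V$ with $V^{-1}O_{U'}V=O_U$ are exactly those in $U'\,\mathcal{G}(I)\,U^{-1}$ (a nonempty set; e.g. $V=U'U^{-1}$ works), so, using invariance of the operator norm under left and right multiplication by unitaries,
\[
D_\mathcal{W}(O_U,O_{U'})=\inf_{M\in\mathcal{G}(I)}\|I-U'MU^{-1}\|=\inf_{M\in\mathcal{G}(I)}\|U-U'M\|.
\]
For $d$: writing a general $K\in\mathcal{G}(U)$ as $UM_1$ and $G\in\mathcal{G}(U')$ as $U'M_2$ with $M_1,M_2\in\mathcal{G}(I)$, and pulling out the unitary $M_1$ on the right,
\[
d(\mathcal{G}(U),\mathcal{G}(U'))=\inf_{M_1,M_2\in\mathcal{G}(I)}\|UM_1-U'M_2\|=\inf_{M\in\mathcal{G}(I)}\|U-U'M\|,
\]
the last equality because $M_2M_1^{-1}$ ranges over all of $\mathcal{G}(I)$ as $M_1,M_2$ do. The two right-hand sides agree, so $\bar\Phi$ is an isometric isomorphism.

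The argument is mostly bookkeeping once the coset picture is set up. The point that needs care is the distance computation: a priori $d$ is an infimum over a \emph{pair} of monomial factors while $D_\mathcal{W}$ is over a single one, and the reason they coincide is exactly that $\mathcal{G}(I)$ is a group (closed under products and inverses), which collapses the pair to one factor; dually, the same group property is what makes $U'\mathcal{G}(I)U^{-1}$ the full set of intertwiners relevant to $D_\mathcal{W}$. The only other thing to verify — and it is built into the identity $\mathcal{G}(U)=U\mathcal{G}(I)$ above — is that $\mathcal{G}(U)$ depends only on $U$ through the coset $U\mathcal{G}(I)$, so that $\bar\Phi$ is genuinely well defined on $\{\mathcal{G}(U)\colon U\in\mathcal{U}(\mathbb{H})\}$.
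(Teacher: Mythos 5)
Your proof is correct, and it is organized differently from the paper's. The paper argues in two stages: it first identifies $\mathcal{X} (\mathbb{H})$ (ordered decompositions) with the coset space $\mathcal{U} (\mathbb{H}) / \mathcal{U} (1)^d$ via the map $T[(|n\rangle \langle n|)_{n \ge 1}] \mapsto [U]$, asserting that $T$ is surjective and isometric, and then invokes the previously stated identification $\mathcal{W} (\mathbb{H}) \cong \mathcal{X} (\mathbb{H}) / \Pi (d)$ to pass to unordered decompositions; the sets $\mathcal{G} (U)$ then appear as the resulting fibers. You instead quotient $\mathcal{U} (\mathbb{H})$ in a single step by the full monomial group $\mathcal{G} (I)$ (generated by $\mathcal{U}(1)^d$ and $\Pi(d)$, i.e.\ the stabilizer of $O_0 = \{|e_n\rangle\langle e_n|\}$), after proving the rigidity statement that this stabilizer consists exactly of the generalized permutation unitaries, and you verify the isometry explicitly by showing that both $D_\mathcal{W}(O_U, O_{U'})$ and $d(\mathcal{G}(U), \mathcal{G}(U'))$ reduce, via unitary invariance of the norm and the group property of $\mathcal{G}(I)$, to $\inf_{M \in \mathcal{G}(I)} \|U - U'M\|$. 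What your route buys is completeness: the metric comparison that the paper leaves implicit (the surjectivity and isometry of $T$, and the compatibility of the two successive quotient metrics) is carried out in full, and the observation that the pair of monomial factors in $d$ collapses to a single factor precisely because $\mathcal{G}(I)$ is a group is exactly the point that needs care. What the paper's two-step route buys is the intermediate space $\mathcal{X} (\mathbb{H}) \cong \mathcal{U} (\mathbb{H})/\mathcal{U}(1)^d$ itself, which it reuses for the topological discussion (e.g.\ the identification of $\pi_1(\mathcal{W} (\mathbb{H}))$ with $\Pi(d)$ and the qubit example $\mathcal{W}(\mathbb{C}^2) \cong \mathbb{S}^2/\mathbb{Z}_2$), whereas your argument treats the proposition as a self-contained statement about a single quotient.
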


\begin{proof}
At first, we prove that
$$
\mathcal{X} (\mathbb{H}) \cong \frac{\mathcal{U} (\mathbb{H})}{\mathcal{U} (1)^d},
$$
from which we conclude the result, since $\mathcal{W} (\mathbb{H}) \cong \frac{\mathcal{X} (\mathbb{H})}{\Pi (d)}.$

Indeed, for an arbitrary fixed basis $(|e_n\rangle )_{n \ge 1}$ of $\mathbb{H},$ we have that $\frac{\mathcal{U} (\mathbb{H})}{\mathcal{U} (1)^d} = \{ [U]:\; U \in \mathcal{U} (\mathbb{H}) \}$ with
$$
[U] = U \cdot \mathcal{U} (1)^d = \Big \{ \sum_{n \ge 1} e^{ \mathrm{i} \theta_n} |n\rangle \langle e_n |:\; \forall \theta_n \in [0, 2 \pi ) \Big \},
$$
where $|n\rangle = U |e_n\rangle$ for $n \ge 1.$ Define $T: \mathcal{X} (\mathbb{H}) \mapsto \frac{\mathcal{U} (\mathbb{H})}{\mathcal{U} (1)^\mathbb{N}}$ by
$$
T [(|n\rangle \langle n |)_{n \ge 1}] \longmapsto [U]
$$
for any $(|n\rangle \langle n |)_{n \ge 1} \in \mathcal{X} (\mathbb{H}),$ where $U$ is the unitary operator so that $|n\rangle = U |e_n\rangle$ for $n \ge 1.$ Then, $T$ is surjective and isometric, and so the required assertion follows. This completes the proof.
\end{proof}

For illustration, we consider the topology of $\mathcal{W} (\mathbb{H})$ in the qubit case of $\mathbb{H} = \mathbb{C}^2.$ Indeed, we have
$$
\mathcal{X} (\mathbb{C}^2) \cong \frac{\mathcal{U} (2)}{\mathcal{U}(1) \times \mathcal{U}(1)} \cong \frac{\mathcal{ S U} (2)}{\mathcal{U}(1)} \cong \mathbb{S}^2,
$$
and so
$$
\mathcal{W} (\mathbb{C}^2) \cong \frac{\mathbb{S}^2}{\mathbb{Z}_2},
$$
where we have used the fact $\Pi (2) = \mathbb{Z}_2.$ This has a simple geometrical interpretation, since every element in $\mathcal{X} (\mathbb{C}^2)$ has the form $X_{\vec{n}} = ( |-\vec{n} \rangle \langle -\vec{n}|, | \vec{n}\rangle \langle \vec{n}|)$ with $\vec{n} = (n_x, n_y, n_z) \in \mathbb{S}^2,$ where
$$
| \pm \vec{n}\rangle \langle \pm \vec{n}| = \frac{1}{2} (I \pm \vec{n} \cdot \vec{\sigma}).
$$
Although $X_{\vec{n}} \not= X_{-\vec{n}}$ in $\mathcal{X} (\mathbb{C}^2),$ they both correspond to the same element in $\mathcal{W} (\mathbb{C}^2).$ This implies that $\mathcal{W} (\mathbb{C}^2)$ may have non-trivial topology: There are exactly two topologically distinct classes of loops in $\mathcal{W} (\mathbb{C}^2),$ one corresponds to the trivial class $X_{\vec{n}} \longmapsto X_{\vec{n}}$ while the other to the nontrivial class $X_{\vec{n}} \longmapsto X_{-\vec{n}}.$ Then the first fundamental group $\pi_1 ( \mathcal{W} (\mathbb{C}^2) ) \cong \mathbb{Z}_2,$ and thus the topology of the observable space $\mathcal{W} (\mathbb{C}^2)$ for the qubit system is nontrivial.

\subsection{Fibre bundles over the observable space}\label{FibleBundleOb}

Following \cite{Isham1999}, a bundle is a triple $(E, \pi, B),$ where $E$ and $B$ are two Hausdorff topological spaces, and $\pi: E \mapsto B$ is a continuous map which is always assumed to be surjective. The space $E$ is called the total space, the space $B$ is called the base space, and the map $\pi$ is called the projection of the bundle. For each $b \in B,$ the set $\pi^{-1} (b)$ is called the fiber of the bundle over $b.$ Given a topological space $F,$ a bundle $(E, \pi, B)$ is called a fiber bundle with the fiber $F$ provided every fiber $\pi^{-1} (b)$ for $b \in B$ is homeomorphic to $F.$ For a topological group $G,$ a bundle $(E, \pi, B)$ is called a $G$-bundle, denoted by $(E, \pi, B, G),$ provided $G$ acts on $E$ from the right preserving the fibers of $E$ such that the map $f$ from the quotient space $E/G$ onto $B$ defined by $f (x G) = \pi (x)$ for $x G \in E/G$ is a homeomorphism, namely
 \[
 \xymatrix{
E \ar[d]_{P_G} \ar[rr]^{id} & & E \ar[d]^\pi  \\
E/G \ar[rr]^{f:\cong} & & B }
 \]
where $P_G$ is the usual projection. A $G$-bundle $(E, \pi, B, G)$ is principal if the action of $G$ on $E$ is free in the sense that $x g = x$ for some $x \in E$ and $g \in G$ implies $g=1,$ and the group $G$ is then called the structure group of the bundle $(E, \pi, B, G)$ (in physical literatures $G$ is also called the gauge group, cf. \cite{BMKNZ2003}). Note that in a principal $G$-bundle $(E, \pi, B, G)$ every fiber $\pi^{-1} (b)$ for $b \in B$ is homeomorphic to $G$ by the freedom of the $G$-action, hence it is a fiber bundle $(E, \pi, B, G)$ with the fiber $G$ and is simply called a principal fiber bundle with the structure group $G.$

Next, we construct principal fiber bundles over the observable space $\mathcal{W} (\mathbb{H}).$ To this end, fix a point $O_0 = \{ | e_n \rangle \langle e_n|: n \ge 1 \} \in \mathcal{W} (\mathbb{H}).$ For any $O \in \mathcal{W} (\mathbb{H}),$ we write
$$
\mathcal{F}^O_{O_0} = \{ U \in \mathcal{U} (\mathbb{H}):\; U^\dagger O U = O_0 \},
$$
that is, $U \in \mathcal{F}^O_{O_0}$ if and only if $\{ U| e_n \rangle: n \ge 1 \}$ is a basis such that $O = \{ U| e_n \rangle \langle e_n| U^*: n \ge 1 \}.$ Indeed, if $O = \{ |n \rangle \langle n|: n \ge 1 \},$ then
$$
\mathcal{F}^O_{O_0} = \mathcal{G} (U) = \bigg \{ \sum_{n \ge 1} e^{ \mathrm{i} \theta_n} | \sigma (n) \rangle \langle e_n |:\; \forall \sigma \in \Pi (d), \forall \theta_n \in [0, 2 \pi ) \bigg \},
$$
where $U$ is a unitary operator so that $|n\rangle = U |e_n\rangle$ for $n \ge 1.$ Also, define
\begin{equation}\label{eq:GaugeGroup}
\mathcal{G}_{O_0} = \bigg \{ \sum_{n \ge 1} e^{ \mathrm{i} \theta_n} |e_{\sigma (n)} \rangle \langle e_n |:\; \forall \sigma \in \Pi (d), \forall \theta_n \in [0, 2 \pi ) \bigg \}.
\end{equation}
We note that $\mathcal{U} (1)^d$ has a unitary representation
\begin{equation}\label{eq:UnitaryGroupRepresentation}
\mathcal{U} (1)^d = \bigg \{ U_\theta = \sum^d_{n = 1} e^{ \mathrm{i} \theta_n} |e_n \rangle \langle e_n |:\; \forall \theta = (\theta_1, \ldots, \theta_d) \in [0, 2 \pi )^d \bigg \},
\end{equation}
while $\Pi (d)$ has a unitary representation
\begin{equation}\label{eq:PermutationGroupRepresentation}
\Pi (d) = \bigg \{ U_\sigma = \sum^d_{n = 1} |e_{\sigma (n)} \rangle \langle e_n |:\; \forall \sigma \in \Pi (d) \bigg \}.
\end{equation}
Thus, $\mathcal{G}_{O_0}$ is a (non-abelian) subgroup of $\mathcal{U} (\mathbb{H})$ generated by $\mathcal{U} (1)^d$ and $\Pi (d).$

The (right) action of $\mathcal{G}_{O_0}$ on $\mathcal{F}^O_{O_0}$ is defined as: For any $G \in \mathcal{G}_{O_0},$
$$
(G, U) \mapsto U G
$$
for all $U \in \mathcal{F}^O_{O_0}.$ Evidently, this action is free and invariant, namely $\mathcal{F}^O_{O_0}\cdot G = \mathcal{F}^O_{O_0}$ for any $G \in \mathcal{G}_{O_0}$ and every $O \in \mathcal{W} (\mathbb{H}).$ Note that
$$
\mathcal{U} (\mathbb{H}) = \bigcup_{O \in \mathcal{W} (\mathbb{H})} \mathcal{F}^O_{O_0},
$$
and $\mathcal{F}^O_{O_0}$ is homeomorphic to $\mathcal{G}_{O_0}$ as topological spaces since $\mathcal{F}^O_{O_0} = \mathcal{G} [U]$ for some $U$ such that $O = \{ U| e_n \rangle \langle e_n| U^*: n \ge 1 \}.$

The following is then principal fiber bundles over the observable space.

\begin{definition}\label{df:PrincipalFiber}
Given $O_0 \in \mathcal{W} (\mathbb{H}),$ a principal fiber bundle over $\mathcal{W} (\mathbb{H})$ associated with $O_0$ is defined to be
$$
\xi_{O_0} (\mathbb{H}) = (\mathcal{U} (\mathbb{H}), \mathcal{W} (\mathbb{H}), \Pi_{O_0}, \mathcal{G}_{O_0}),
$$
where $\mathcal{U} (\mathbb{H})$ is the total space, and the bundle projection $\Pi_{O_0}: \mathcal{U} (\mathbb{H}) \mapsto \mathcal{W} (\mathbb{H})$ is defined by
$$
\Pi_{O_0} (U) = O
$$
provided $U \in \mathcal{F}^O_{O_0}$ for (unique) $O \in \mathcal{W} (\mathbb{H}),$ namely $\Pi^{-1} (O) = \mathcal{F}^O_{O_0}$ for every $O \in \mathcal{W} (\mathbb{H}).$

We simply denote this bundle by $\xi_{O_0} = \xi_{O_0} (\mathbb{H}).$
\end{definition}

\begin{remark}\rm
In the sequel, we will see that the fixed point $O_0 \in \mathcal{W} (\mathbb{H})$ physically plays the role of a complete measurement. Indeed, it corresponds to a von Neumann measurement. On the other hand, the point $O_0$ induces a differential structure over the base space $\mathcal{W} (\mathbb{H})$ and determines the geometric structure of $\xi_{O_0},$ namely quantum connection and parallel transportation.
\end{remark}

For any two points $O_0, O'_0 \in \mathcal{W} (\mathbb{H})$ with $O_0 = \{ | e_n \rangle \langle e_n|: n \ge 1 \}$ and $O'_0 = \{ | e_n' \rangle \langle e_n' |: n \ge 1 \},$ we define a unitary operator $U_0$ by $U_0 |e_n \rangle = | e_n' \rangle$ for $n \ge 1.$ Then the map $T: \xi_{O_0} \mapsto \xi_{O'_0}$ defined by $T U = U U^{-1}_0$ for all $U \in \mathcal{U} (\mathbb{H})$ is an isometric isomorphism on $\mathcal{U} (\mathbb{H})$ such that $T$ maps the fibers of $\xi_{O_0}$ onto the fibers of $\xi_{O'_0}$ over the same points in the base space $\mathcal{W} (\mathbb{H}),$ namely the following diagram is commutative:
 \[
 \xymatrix{
\mathcal{U} (\mathbb{H}) \ar[dr]_{\Pi_{O_0}} \ar[rr]^{T} & & \mathcal{U} (\mathbb{H}) \ar[dl]^{\Pi_{O'_0}}  \\
      &  \mathcal{W} (\mathbb{H}) &            }
 \]
that is, $\Pi_{O_0} = \Pi_{O'_0} \circ T.$ Thus, $\xi_{O_0}$ and $\xi_{O'_0}$ are isomorphic as principal fiber bundles (cf. \cite{Isham1999}).

\subsection{Quantum connection}\label{q-connection}

Since $d=\mathrm{dim} (\mathbb{H}) < \infty,$ $\xi_{O_0}$ has a natural differential structure, namely $\mathcal{U} (\mathbb{H})$ and $\mathcal{W} (\mathbb{H})$ are both differential manifolds. In fact, $\mathcal{U} (\mathbb{H})$ is a Lie group, while $\mathcal{W} (\mathbb{H})$ can be identified as a submanifold of the Grassmannian manifold of $d$-dimensional subspaces of $\mathcal{B} (\mathbb{H})$ (cf. \cite{ZV2018}). However, the classical differential structure over $\mathcal{W} (\mathbb{H})$ seems not to be applicable in the geometric interpretation of observable-geometric phases. To define the suitable concepts of quantum connection and parallel transportation over the principal fiber bundle $\xi_{O_0},$ we need to introduce a differential structure over $\mathcal{W} (\mathbb{H})$ associated with each fixed $O_0 \in \mathcal{W} (\mathbb{H}).$ Indeed, we will introduce a geometric structure over $\xi_{O_0}$ in a certain operator-theoretic sense, where the differential structure on $\mathcal{W} (\mathbb{H})$ is different from the one given in \cite{ZV2018}.

Let us begin with the definition of tangent vectors for $\mathcal{G}_{O_0}$ in the operator-theoretic sense.

\begin{definition}\label{df:q-tangvectorUgroup}
Fix $O_0 \in \mathcal{W} (\mathbb{H}).$ For a given $U \in \mathcal{G}_{O_0},$ an operator $Q \in \mathcal{B} (\mathbb{H})$ is called a tangent vector at $U$ for $\mathcal{G}_{O_0},$ if there is a curve $\chi: (-\varepsilon, \varepsilon) \ni t \mapsto U(t) \in \mathcal{G}_{O_0}$ with $\chi (0) = U$ such that for every $h \in \mathbb{H},$ the limit
$$
\lim_{t \to 0} \frac{U(t) (h) - U (h)}{t} = Q (h)
$$
in $\mathbb{H},$ denoted by $Q = \frac{d \chi (t)}{d t} \big |_{t=0}.$ The set of all tangent vectors at $U$ is denoted by $T_U \mathcal{G}_{O_0},$ and $T \mathcal{G}_{O_0}= \bigcup_{U \in \mathcal{G}_{O_0}} T_U \mathcal{G}_{O_0}.$ In particular, we denote $\mathrm{g}_{O_0} = T_U \mathcal{G}_{O_0}$ if $U =I.$
\end{definition}

It is easy to check that given $U \in \mathcal{G}_{O_0}$ with the form $U =\sum^d_{n = 1} e^{ \mathrm{i} \theta_n} |e_{\sigma (n)} \rangle \langle e_n |$ for some $\sigma \in \Pi (d),$ for every $Q \in T_U \mathcal{G}_{O_0}$ there exists a sequence of complex number $(\alpha_n)_{n \ge 1}$ such that
\begin{equation}\label{eq:VertVectStrucGroupExpress}
Q = \sum_{n \ge 1} \alpha_n |e_{\sigma (n)} \rangle \langle e_n |.
\end{equation}
In particular, each element $Q \in \mathrm{g}_{O_0}$ is of form
\begin{equation}\label{eq:VertVectLieAlg}
Q = \sum_{n \ge 1} \alpha_n |e_n \rangle \langle e_n |,
\end{equation}
where $(\alpha_n)_{n \ge 1}$ is a sequence of complex number. Thus, $T_U \mathcal{G}_{O_0}$ is a linear subspace of $\mathcal{B} (\mathbb{H}).$

The following is the tangent space for the base space $\mathcal{W} (\mathbb{H})$ in the operator-theoretic sense.

\begin{definition}\label{df:q-tangvectorBaseSpace}

\begin{enumerate}[{\rm 1)}]

\item Fix $O_0 \in \mathcal{W} (\mathbb{H}).$ A continuous curve $\chi: [a, b] \ni t \mapsto O(t) \in \mathcal{W} (\mathbb{H})$ is said to be differential at a fixed $t_0 \in (a, b)$ relative to $O_0,$ if there is a nonempty subset $\mathcal{A}$ of $\mathcal{B} (\mathbb{H})$ satisfying that for any $Q \in \mathcal{A}$ there exist $\varepsilon>0$ so that $(t_0 -\varepsilon, t_0 + \varepsilon) \subset [a,b]$ and a strongly continuous curve $\gamma: (t_0 -\varepsilon, t_0 + \varepsilon) \ni t \mapsto U_t \in \mathcal{F}^{O(t)}_{O_0}$ such that the limit
$$
\lim_{t \to t_0} \frac{U_t (h) - U_{t_0} (h)}{t - t_0} = Q (h)
$$
for any $h \in \mathbb{H}.$ In this case, $\mathcal{A}$ is called a tangent vector of $\chi$ at $t=t_0$ and denoted by
$$
\mathcal{A} = \frac{d O(t)}{d t}\big |_{t = t_0} = \frac{d \chi (t)}{d t} \big |_{t = t_0}.
$$
We can define the left (or, right) tangent vector of $\chi$ at $t = a$ (or, $t =b$) in the usual way.

\item Fix $O_0 \in \mathcal{W} (\mathbb{H}).$ Given $O \in \mathcal{W} (\mathbb{H}),$ a tangent vector of $\mathcal{W} (\mathbb{H})$ at $O$ relative to $O_0$ is define to be a nonempty subset $\mathcal{A}$ of $\mathcal{B} (\mathbb{H}),$ provided $\mathcal{A}$ is a tangent vector of some continuous curve $\chi$ at $t=0,$ where $\chi: (-\varepsilon, \varepsilon) \ni t \mapsto O(t) \in \mathcal{W} (\mathbb{H})$ with $\chi (0) = O,$ i.e., $\mathcal{A} = \frac{d O (t)}{d t} \big |_{t=0}.$ We denote by $T_O \mathcal{W} (\mathbb{H})$ the set of all tangent vectors at $O,$ and write $T \mathcal{W} (\mathbb{H}) = \bigcup_{O \in \mathcal{W} (\mathbb{H})} T_O \mathcal{W} (\mathbb{H}).$

\end{enumerate}
\end{definition}

Therefore, the tangent vectors for the base space $\mathcal{W} (\mathbb{H})$ is strongly dependent on the choice of a measurement point $O_0.$

\begin{definition}\label{df:q-tangvectorFiberSpace}

\begin{enumerate}[{\rm 1)}]

\item A strongly continuous curve $\gamma: [a, b] \ni t \mapsto U(t) \in \mathcal{U} (\mathbb{H})$ is said to be differential at a fixed $t_0 \in (a, b),$ if there is an operator $Q \in \mathcal{B} (\mathbb{H})$ such that the limit
$$
\lim_{t \to t_0} \frac{U_t (h) - U_{t_0} (h)}{t - t_0} = Q (h)
$$
for all $h \in \mathbb{H}.$ In this case, $Q$ is called the tangent vector of $\gamma$ at $t=t_0$ and denoted by
$$
Q = \frac{d \gamma (t)}{d t} \Big |_{t = t_0} = \frac{d U (t)}{d t} \Big |_{t = t_0}.
$$
We can define the left (or, right) tangent vector of $\gamma$ at $t = a$ (or, $t =b$) in the usual way.

Moreover, $\gamma$ is called a smooth curve, if $\gamma$ is differential at each point $t \in [a, b],$ and for any $h \in \mathbb{H},$ the $\mathbb{H}$-valued function $t \mapsto \frac{d \gamma (t)}{d t} (h)$ is continuous in $[a, b].$

\item Fix $O_0 \in \mathcal{W} (\mathbb{H}).$ For a given $P \in \mathcal{U} (\mathbb{H}),$ an operator $Q \in \mathcal{B} (\mathbb{H})$ is called a tangent vector of $\xi_{O_0}$ at $P,$ if there exists a strongly continuous curve $\gamma: (-\varepsilon, \varepsilon) \ni t \mapsto P_t \in \mathcal{U} (\mathbb{H})$ with $\gamma (0) = P,$ such that $\gamma$ is differential at $t=0$ and $Q = \frac{d \gamma (t)}{d t} \big |_{t =0}.$ Denote by $T_P \xi_{O_0} (\mathbb{H})$ the set of all tangent vectors of $\xi_{O_0}$ at $P$ relative to $O_0,$ and write
$$
T \xi_{O_0} (\mathbb{H}) = \bigcup_{P \in \mathcal{U} (\mathbb{H})} T_P \xi_{O_0} (\mathbb{H}).
$$

\item Fix $O_0 \in \mathcal{W} (\mathbb{H}).$ Given $P \in \mathcal{U} (\mathbb{H}),$ a tangent vector $Q \in T_P \xi_{O_0} (\mathbb{H})$ is said to be vertical, if there is a strongly continuous curve $\gamma: (-\varepsilon, \varepsilon)\ni t \mapsto P_t \in \mathcal{F}^{\Pi (P)}_{O_0}$ with $\gamma (0) = P$ such that $\gamma$ is differential at $t=0$ and $Q = \frac{d \gamma (t)}{d t} \big |_{t =0}.$ We denote by $V_P \xi_{O_0} (\mathbb{H})$ the set of all vertically tangent vectors at $P.$
\end{enumerate}
\end{definition}

\begin{remark}\rm
For any $P \in \mathcal{U} (\mathbb{H}),$ the tangent space $T_P \xi_{O_0} (\mathbb{H})$ is the same for all measurement points $O_0$ since it is the tangent space of $\mathcal{U} (\mathbb{H})$ at $P$ in the operator-theoretic sense. However, the vertically tangent space $V_P \xi_{O_0} (\mathbb{H})$ is different from each other for distinct measurement points. In particular, every $Q \in V_P \xi_{O_0} (\mathbb{H})$ with $O_0 = \{|e_n \rangle \langle e_n |: 1 \le n \le d \}$ has the form
\begin{equation}\label{eq:VertTangentVect}
Q = \sum_{n \ge 1} \alpha_n P |e_n \rangle \langle e_n |,
\end{equation}
where $(\alpha_n)_{n \ge 1}$ is a sequence of complex number.
\end{remark}

For any fixed $O_0 \in \mathcal{W} (\mathbb{H}),$ recall that for each $G \in \mathcal{G}_{O_0},$ the right action $R_G$ of $\mathcal{G}_{O_0}$ on $\xi_{O_0}$ is defined by
$$
R_G (U) = U G,\quad \forall U \in \mathcal{U} (\mathbb{H}).
$$
This induces a map $(R_G)_*: T_P \xi_{O_0} (\mathbb{H}) \mapsto T_{R_G(P)} \xi_{O_0} (\mathbb{H})$ for each $P \in \mathcal{U} (\mathbb{H})$ such that
$$
(R_G)_* (Q) = Q G,\quad \forall Q \in T_P \xi_{O_0} (\mathbb{H}).
$$
Since $R_G$ preserves the fibers of $\xi_{O_0},$ then $(R_G)_*$ maps $V_P \xi_{O_0} (\mathbb{H})$ into $V_{R_G(P)} \xi_{O_0} (\mathbb{H}).$

Now, we are ready to define the concept of quantum connection over the observable space.

\begin{definition}\label{df:q-connetion}
Fix $O_0 \in \mathcal{W} (\mathbb{H}).$ A connection on the principal fiber bundle $\xi_{O_0}= (\mathcal{U} (\mathbb{H}), \mathcal{W} (\mathbb{H}), \Pi_{O_0}, \mathcal{G}_{O_0})$ is a family of linear functionals $\Omega = \{\Omega_P:\; P \in \mathcal{U} (\mathbb{H}) \},$ where for each $P \in \mathcal{U} (\mathbb{H}),$ $\Omega_P$ is a linear functional in $T_P \xi_{O_0} (\mathbb{H})$ with values in $\mathrm{g}_{O_0},$ satisfying the following conditions:
\begin{enumerate}[{\rm (1)}]

\item For any $P \in \mathcal{U} (\mathbb{H})$ and for all vertically tangent vectors $Q \in V_P \xi_{O_0} (\mathbb{H}),$ one has
\begin{equation}\label{eq:q-ConnectionVertTangVect}
\Omega_P (Q) = P^{-1} Q.
\end{equation}

\item $\Omega_P$ depends continuously on $P,$ in the sense that if $P_n$ converges to $P$ as well as $Q_n \in T_{P_n} \xi_{O_0} (\mathbb{H})$ converges $Q_0 \in T_P \xi_{O_0} (\mathbb{H})$ in the operator topology of $\mathcal{B} (\mathbb{H}),$ then $\lim_{n \to \infty} \Omega_{P_n} (Q_n) = \Omega_P (Q_0)$ in $\mathrm{g}_{O_0}.$

\item Under the right action of $\mathcal{G}_{O_0}$ on $\xi_{O_0} (\mathbb{H}),$ $\Omega$ transforms according to
\begin{equation}\label{eq:GaugeTransConnection}
\Omega_{R_G(P)} [(R_G)_* (Q )] = G^{-1} \Omega_P (Q) G,
\end{equation}
for $G \in \mathcal{G}_{O_0},$ $P \in \mathcal{U} (\mathbb{H}),$ and $Q \in T_P \xi_{O_0} (\mathbb{H}).$

\end{enumerate}
Such a connection is simply called an $O_0$-connection.
\end{definition}

Next, we present a canonical example of such quantum connections, which plays a crucial role in the expression of observable-geometric phases.

\begin{example}\label{Ex:CanonicalConnection}\rm
Given a fixed $O_0  = \{|e_n \rangle\langle e_n|: 1 \le n \le d \} \in \mathcal{W} (\mathbb{H}),$ we define $\check{\Omega} = \{\check{\Omega}_P: P \in \mathcal{U} (\mathbb{H}) \}$ as follows: For each $P \in \mathcal{U} (\mathbb{H}),$ $\check{\Omega}_P : T_P \xi_{O_0} (\mathbb{H}) \mapsto \mathrm{g}_{O_0}$ is defined by
\begin{equation}\label{eq:CanonConnction}
\check{\Omega}_P (Q) = P^{-1} \star Q
\end{equation}
for any $Q \in T_P \xi_{O_0} (\mathbb{H}),$ where
$$
P^{-1} \star Q = \sum^d_{n = 1}  \langle e_n | P^{-1} Q | e_n \rangle |e_n \rangle \langle e_n|.
$$
By \eqref{eq:VertTangentVect}, one has $P^{-1} \star Q = P^{-1} Q \in \mathrm{g}_{O_0}$ for any $Q \in V_P \xi_{O_0} (\mathbb{H}),$ namely $\check{\Omega}_P$ satisfies \eqref{eq:q-ConnectionVertTangVect}. The conditions (2) and (3) of Definition \ref{df:q-connetion} are clearly satisfied by $\check{\Omega}.$ Hence, $\check{\Omega}$ is an $O_0$-connection on $\xi_{O_0}.$ In this case, we write $\check{\Omega}_P = P^{-1} \star d P$ for any $P \in \mathcal{U} (\mathbb{H}).$
\end{example}

\subsection{Quantum parallel transportation}\label{q-ParallelTransport}

The quantum parallel transportation in the state space was introduced in \cite{Simon1983, AA1987} and studied in \cite{Anandan1992} in details. This section is devoted to the study of quantum parallel transport over the observable space.

\begin{definition}\label{df:q-lift}
Fix a point $O_0 \in \mathcal{W} (\mathbb{H}).$ For a continuous curve $C_W: [a, b] \ni t \longmapsto O (t) \in \mathcal{W} (\mathbb{H}),$ a lift of $C_W$ with respect to $O_0$ is defined to be a continuous curve
$$
C_P: [a, b] \ni t \longmapsto U(t) \in \mathcal{U} (\mathbb{H})
$$
such that $U(t) \in \mathcal{F}^{O(t)}_{O_0}$ for any $t\in [a, b].$
\end{definition}

\begin{remark}\label{rk:q-lift}\rm
Note that, a lift of $C_W$ depends on the choice of the point $O_0;$ for the same curve $C_W,$ lifts are distinct for different points $O_0.$ For this reason, such a lift $C_P$ is called a $O_0$-lift of $C_W.$
\end{remark}

\begin{definition}\label{df:SmoothCurve}
Fix a point $O_0 \in \mathcal{W} (\mathbb{H}).$ A continuous curve $C_W: [a, b] \ni t \longmapsto O (t) \in \mathcal{W} (\mathbb{H})$ is said to be smooth, if it has a $O_0$-lift $C_P: [a, b] \ni t \longmapsto U(t) \in \mathcal{U} (\mathbb{H})$ which is a smooth curve. In this case, $C_P$ is called a smooth $O_0$-lift of $C_W.$
\end{definition}

Note that, if a continuous curve $C_W: [a, b] \ni t \longmapsto O (t) \in \mathcal{W} (\mathbb{H})$ is smooth, then it is differential at every point $t \in [a, b].$ Indeed, suppose that $C_P: [a, b] \ni t \longmapsto U(t) \in \mathcal{U} (\mathbb{H})$ is a smooth $O_0$-lift of $C_W.$ For each $t \in [a, b],$ we have $\frac{d C_P (t)}{d t} \in \frac{d O(t)}{d t},$ namely $\frac{d O(t)}{d t}$ is a nonempty subset of $\mathcal{B} (\mathbb{H}),$ and hence $C_W$ is differential.

\begin{definition}\label{df:q-HorizontalLift}
Fix $O_0 \in \mathcal{W} (\mathbb{H})$ and let $\Omega$ be an $O_0$-connection on $\xi_{O_0} (\mathbb{H}).$ Let $C_W: [0, T] \ni t \longmapsto O (t) \in \mathcal{W} (\mathbb{H})$ be a smooth curve. If $C_P: [0, T] \ni t \longmapsto \tilde{U} (t)$ is a smooth $O_0$-lift of $C_W$ such that
\begin{equation}\label{eq:ParallelTransfConnectionCond}
\Omega_{\tilde{U} (t)} \Big [ \frac{d \tilde{U} (t)}{d t}\Big ] =0
\end{equation}
for every $t \in [0, T],$ then $C_P$ is called a horizontal lift of $C_W$ with respect to $O_0$ and $\Omega.$

In this case, $C_P$ is simply called the horizontal $O_0$-lift of $C_W$ associated with $\Omega.$ And, the curve $C_P: t \mapsto \tilde{U} (t)$ is called the parallel transportation along $C_W$ with the starting point $C_P (0) = \tilde{U}(0)$ with respect to the connection $\Omega$ on $\xi_{O_0} (\mathbb{H}).$
\end{definition}

The following proposition shows the existence of the horizontal lifts.

\begin{proposition}\label{prop:ParallelTransf}\rm
Fix $O_0 \in \mathcal{W} (\mathbb{H})$ and let $\Omega$ be an $O_0$-connection on $\xi_{O_0} (\mathbb{H}).$ Let $C_W: [0, T] \ni t \longmapsto O (t) \in \mathcal{W} (\mathbb{H})$ be a smooth curve. For any $U_0 \in \mathcal{F}^{O(0)}_{O_0},$ there exists a unique horizontal $O_0$-lift $\tilde{C}_P$ of $C_W$ with the initial point $\tilde{C}_P (0) = U_0.$
\end{proposition}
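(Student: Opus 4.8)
The plan is to reduce the statement to an ODE existence-and-uniqueness problem on the unitary group, mimicking the classical construction of horizontal lifts but carried out in the operator-theoretic sense of Section~\ref{q-connection}. First I would start from \emph{any} smooth $O_0$-lift $C_P: [0,T] \ni t \mapsto U(t) \in \mathcal{U}(\mathbb{H})$ of $C_W$, which exists by Definition~\ref{df:SmoothCurve} since $C_W$ is assumed smooth; by right-multiplying by a fixed element of $\mathcal{G}_{O_0}$ I may also arrange $U(0) = U_0$, using that $U_0$ and $U(0)$ lie in the same fibre $\mathcal{F}^{O(0)}_{O_0}$ and that the $\mathcal{G}_{O_0}$-action is transitive on fibres. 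Now I seek the desired horizontal lift in the form $\tilde{U}(t) = U(t) G(t)$ with $G(t) \in \mathcal{G}_{O_0}$ and $G(0) = I$; since the fibres are $\mathcal{G}_{O_0}$-orbits, $\tilde{U}(t)$ is automatically an $O_0$-lift, and the only thing to determine is the curve $G(t)$ in the structure group so that the horizontality condition \eqref{eq:ParallelTransfConnectionCond} holds.

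Next I would convert \eqref{eq:ParallelTransfConnectionCond} into an explicit differential equation for $G(t)$. Differentiating $\tilde{U}(t) = U(t) G(t)$ in the strong operator sense gives $\frac{d\tilde{U}}{dt} = \frac{dU}{dt} G(t) + U(t) \frac{dG}{dt}$. Applying $\Omega_{\tilde{U}(t)}$ and using the transformation law \eqref{eq:GaugeTransConnection} together with the vertical-vector normalization \eqref{eq:q-ConnectionVertTangVect} — noting that $U(t)\frac{dG}{dt}$ is a vertical tangent vector at $\tilde{U}(t)$ because $G(t)$ stays in $\mathcal{G}_{O_0}$ — I obtain
\begin{equation*}
0 = \Omega_{\tilde{U}(t)}\Big[\tfrac{d\tilde{U}}{dt}\Big] = G(t)^{-1}\,\Omega_{U(t)}\Big[\tfrac{dU}{dt}\Big]\,G(t) + G(t)^{-1}\tfrac{dG}{dt},
\end{equation*}
so that $\frac{dG}{dt} = -\,\Omega_{U(t)}\big[\frac{dU}{dt}\big]\,G(t)$, a linear ODE in $\mathcal{B}(\mathbb{H})$ driven by the continuous $\mathrm{g}_{O_0}$-valued coefficient $A(t) := \Omega_{U(t)}[\frac{dU}{dt}]$ (continuity in $t$ follows from condition~(2) of Definition~\ref{df:q-connetion} and smoothness of $C_P$). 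This is a finite-dimensional linear system, so it has a unique global solution with $G(0) = I$; I then check that $G(t)$ remains in $\mathcal{G}_{O_0}$ by observing that $\mathrm{g}_{O_0}$ is the tangent algebra at $I$ of the subgroup generated by $\mathcal{U}(1)^d$ and $\Pi(d)$, and that the flow of an $\mathrm{g}_{O_0}$-valued time-dependent vector field started at $I$ stays in the identity component of $\mathcal{G}_{O_0}$ — or, more simply for this bundle, since elements of $\mathrm{g}_{O_0}$ all have the diagonal form \eqref{eq:VertVectLieAlg}, the solution is the explicit diagonal unitary $G(t) = \sum_{n} \exp\!\big(-\int_0^t \langle e_n| A(s) |e_n\rangle\, ds\big)\, |e_n\rangle\langle e_n|$, which manifestly lies in $\mathcal{U}(1)^d \subset \mathcal{G}_{O_0}$.

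For uniqueness, suppose $\tilde{C}_P'$ is another horizontal $O_0$-lift with the same initial point; since it is in particular an $O_0$-lift over the same base curve, $\tilde{C}_P'(t) = U(t) G'(t)$ for some continuous (indeed smooth) $G'(t) \in \mathcal{G}_{O_0}$ with $G'(0) = I$, and the same computation shows $G'$ solves the identical linear ODE, whence $G' = G$ by uniqueness for linear systems. The one point requiring a little care — and the main obstacle — is verifying that the operator-theoretic notions of differentiability from Definitions~\ref{df:q-tangvectorFiberSpace} and \ref{df:q-tangvectorBaseSpace} behave well enough that (i) the product rule used above is legitimate in the strong limit sense and (ii) the solution $G(t)$ of the ODE, when multiplied against $U(t)$, indeed yields a curve that is \emph{smooth} in the sense of Definition~\ref{df:q-tangvectorFiberSpace}(1) and whose tangent vector at each $t$ is the claimed one; both reduce to uniform-boundedness and equicontinuity arguments on the finite-dimensional space $\mathcal{B}(\mathbb{H})$, so they are routine but must be spelled out since the ambient framework is nonstandard.
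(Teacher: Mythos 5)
Your proposal is correct and follows essentially the same route as the paper: gauge-transform an arbitrary smooth $O_0$-lift by a curve $G(t)\in\mathcal{G}_{O_0}$ solving the linear equation $\frac{dG}{dt}=-\Omega_{U(t)}\big[\frac{dU}{dt}\big]G(t)$, $G(0)=I$ (the paper's \eqref{eq:GeodesicEquaGaugeTransf}), and get uniqueness by showing any second horizontal lift differs by a gauge factor satisfying the same equation. Your extra steps (deriving the ODE from \eqref{eq:GaugeTransConnection}, writing the explicit diagonal solution, and flagging the product-rule/smoothness checks) merely spell out details the paper leaves implicit.
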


\begin{proof}
Let $\Gamma: [0, T] \ni t \longmapsto U (t) \in \mathcal{U} (\mathbb{H})$ be a smooth $O_0$-lift of $C_W$ with $\Gamma (0) = U_0.$ To prove the existence, note that the condition (2) of Definition \ref{df:q-connetion} implies that the function $t \mapsto \Omega_{\Gamma (t)} \big [ \frac{d \Gamma (t)}{d t} \big ]$ is continuous in $[0, T].$ Then,
\begin{equation}\label{eq:GeodesicEquaGaugeTransf}
\frac{d G (t)}{d t} = - \Omega_{\Gamma (t)} \Big [ \frac{d \Gamma (t)}{d t} \Big ] \cdot G (t)
\end{equation}
with $G (0) = I$ has the unique solution in $[0, T].$ Therefore, $\tilde{C}_P (t) = \Gamma (t) \cdot G (t)$ is the required horizontal $O_0$-lift of $C_W$ for the initial point $U_0 \in \mathcal{F}^{O(0)}_{O_0}.$

To prove the uniqueness, suppose $\check{C}_P: [0, T] \ni t \longmapsto \check{U} (t) \in\mathcal{U} (\mathbb{H})$ be another horizontal $O_0$-lift of $C_W$ for the initial point $U \in \mathcal{F}^{O(0)}_{O_0}.$ Then $\check{C}_P (t) = \tilde{C}_P (t) \cdot \check{G} (t)$ for all $t \in [0, T],$ where $\check{G}(0) = I.$ Since
$$
0 = \Omega_{\check{U}(t)} \Big [ \frac{d \check{U}(t)}{d t} \Big ] = \check{G}(t)^{-1} \frac{d \check{G}(t)}{d t},
$$
this follows that $\check{G}(t) = I$ for all $t \in [0, T].$ Hence, the horizontal $O_0$-lift of $C_W$ is unique for the initial point $U \in \mathcal{F}^{O(0)}_{O_0}.$
\end{proof}

\begin{example}\label{Ex:QuantumParallelTransport}\rm
Let $C_P: [0,T] \ni t \mapsto U(t) \in \mathcal{U} (\mathbb{H})$ be a unitary evolution satisfying the Schr\"{o}dinger equation
\begin{equation}\label{eq:SchrodingerEquTimeUnitaryEvolution}
\mathrm{i} \frac{d U(t)}{d t} = H(t) U (t)
\end{equation}
where $H(t)$ are time-dependent Hamiltonian operators in $\mathbb{H}.$ Given a fixed $O_0 = \{|e_n \rangle\langle e_n|: 1 \le n \le d \} \in \mathcal{W} (\mathbb{H}),$ define $C_W: [0, T] \ni t \longmapsto O (t) \in \mathcal{W} (\mathbb{H})$ by $O(t) = U(t) O_0 U(t)^{-1}$ for all $t \in [0, T].$ We define $\tilde{C}_P: [0,T] \ni t \mapsto \tilde{U}(t) \in \mathcal{U} (\mathbb{H})$ by
$$
\tilde{U} (t) = \sum^d_{n=1} \exp \Big ( \mathrm{i} \int^t_0 \langle e_n | \Big [ U(t')^{-1} \frac{d U(t')}{d t'} \Big ] | e_n \rangle d t' \Big ) U(t) | e_n \rangle \langle e_n|
$$
for every $t \in [0, T],$ along with the initial point $\tilde{U}(0) = U(0) \in \mathcal{F}^{O(0)}_{O_0}.$ Then $\tilde{C}_P$ is a smooth $O_0$-lift of $C_W$ such that
$$
\check{\Omega}_{\tilde{U} (t)} \Big [ \frac{d \tilde{U}(t)}{d t} \Big ] =0
$$
for all $t \in [0, T],$ where $\check{\Omega}$ is the canonical $O_0$-connection introduced in Example \ref{Ex:CanonicalConnection}. Thus, $\tilde{C}_P$ is the horizontal $O_0$-lift of $C_W$ with respect to $\check{\Omega},$ namely $\tilde{C}_p$ is the parallel transportation along $C_W$ with the starting point $C_P (0) = U(0)$ with respect to the connection $\check{\Omega}$ on $\xi_{O_0} (\mathbb{H}).$
\end{example}

\subsection{Geometric interpretation of observable-geometric phases}\label{GeoInter}

We are now ready to give a geometric interpretation of $\beta_n$'s defined as in \eqref{eq:q-GeoPhase} in Section \ref{ObGeoPhase}. Indeed, given a point $O_0 = \{ | e_n \rangle \langle e_n|: n \ge 1 \} \in \mathcal{W} (\mathbb{H}),$ using the notations involved in Section \ref{ObGeoPhase}, we define $\tilde{U} (t) \in \mathcal{U} (\mathbb{H})$ for $0 \le t \le T$ by
$$
\tilde{U} (t) = \sum^d_{n = 1} |\tilde{\psi}_n (t) \rangle \langle e_n |,
$$
where $|\tilde{\psi}_n (t) \rangle$'s are defined in \eqref{eq:Parallelvect}. Then,
$$
\tilde{C}_P: \; [0, T] \ni t \longmapsto \tilde{U} (t) \in \mathcal{U} (\mathbb{H})
$$
is a smooth $O_0$-lift of $C_W: [0, T] \ni t \mapsto O(t) = \{ |\psi_n (t) \rangle \langle \psi_n (t)|: 1 \le n \le d \}.$ By \eqref{eq:ParallelCondVect}, we have
\begin{equation}\label{eq:CanonicalParallelCond}
\check{\Omega}_{\tilde{U} (t)} \Big [ \frac{d \tilde{U} (t)}{d t}  \Big ] = 0
\end{equation}
for all $t \in [0,T],$ where $\check{\Omega}$ is the canonical connection (cf. Example \ref{Ex:CanonicalConnection}). This means that $[0, T] \ni t \mapsto \tilde{U} (t)$ is the parallel transportation along $C_W$ associated with the {\it canonical connection} $\check{\Omega}$ on $\xi_{O_0}.$ Therefore,  $\tilde{C}_P$ is the {\it horizontal} $O_0$-lift of $C_W$ with respect to $\check{\Omega}$ in the principal bundle $\xi_{O_0}$ such that $\tilde{U} (T) |e_n \rangle = |\tilde{\psi}_n (T) \rangle = e^{\mathrm{i} \beta_n} |\psi_n \rangle,$ and so
\begin{equation}\label{eq:HolonomyUnitaryOper}
\tilde{U} (T) = \sum^d_{n = 1} e^{\mathrm{i} \beta_n} | \psi_n \rangle \langle e_n |
\end{equation}
is the holonomy element associated with the connection $\check{\Omega}, C_W,$ and $U_0 = \sum^d_{n = 1}| \psi_n \rangle \langle e_n |$ in $\xi_{O_0}.$

In conclusion, we have the following theorem.

\begin{theorem}\label{thm:q-GeoPhase}\rm
\begin{enumerate}[\rm (1)]

\item For every $n = 1,\ldots, d,$ the geometric phase $\beta_n$ defined in \eqref{eq:q-GeoPhase} is given by
\begin{equation}\label{eq:q-GeoPhaseExpression}
\beta_n =\langle e_n | \mathrm{i} \int^T_0 \check{\Omega}_{U (t)} \Big [ \frac{d U (t)}{d t} \Big ] d t |e_n \rangle = \langle e_n | \mathrm{i} \oint_{C_W} U^{-1} \star d U |e_n \rangle,
\end{equation}
where $C_P: [0, T] \ni t \mapsto U (t) \in \mathcal{U} (\mathbb{H})$ corresponds to any of the closed smooth $O_0$-lifts of $C_W$ with $U(0) = U_0,$ and $\check{\Omega}_U = U^{-1} \star d U$ is the canonical connection on $\xi_{O_0} (\mathbb{H}).$ Thus, $\beta_n$'s are independent of the choice of the time parameterization of $U(t),$ namely the speed with which $U(t)$ traverses its closed path. It is also independent of the choice of the Hamiltonian as long as the Heisenberg equations \eqref{eq:HeisenbergEquTime} involving these Hamiltonians describe the same closed path $C_W$ in $\mathcal{W} (\mathbb{H}).$

\item The set $\{ \beta_n: 1 \le n \le d\}$ is independent of the choice of the starting point $U_0.$

\item The set $\{ \beta_n: 1 \le n \le d\}$ is independent of the choice of the measurement point $O_0.$ Therefore, this number set is considered to be a set of geometric invariants for $C_W.$

\end{enumerate}
\end{theorem}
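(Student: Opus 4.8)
The plan is to reduce all three statements to the uniqueness of horizontal lifts (Proposition~\ref{prop:ParallelTransf}) together with the explicit horizontal lift $\tilde{C}_P:t\mapsto\tilde U(t)=\sum_n|\tilde\psi_n(t)\rangle\langle e_n|$ constructed just above, whose endpoint is $\tilde U(T)=\sum_n e^{\mathrm{i}\beta_n}|\psi_n\rangle\langle e_n|$ as in \eqref{eq:HolonomyUnitaryOper} and which is based at $U_0=\sum_n|\psi_n\rangle\langle e_n|$. For part~(1), fix any closed smooth $O_0$-lift $C_P:t\mapsto U(t)$ of $C_W$ with $U(0)=U(T)=U_0$. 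As in the proof of Proposition~\ref{prop:ParallelTransf}, let $G(t)$ solve $\frac{dG}{dt}=-\check\Omega_{U(t)}\big[\frac{dU(t)}{dt}\big]G(t)$, $G(0)=I$. Since $\check\Omega$ takes values in the \emph{abelian} algebra $\mathrm{g}_{O_0}$ of operators diagonal in $\{|e_n\rangle\}$, the path-ordered exponential collapses to $G(t)=\exp\!\big(-\int_0^t\check\Omega_{U(s)}[\frac{dU(s)}{ds}]\,ds\big)\in\mathcal{U}(1)^d$, and a one-line computation (using that conjugation by the diagonal $G(t)$ fixes the diagonal of $U(t)^{-1}\frac{dU(t)}{dt}$) gives $\check\Omega_{U(t)G(t)}\big[\frac{d}{dt}(U(t)G(t))\big]=0$. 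Hence $t\mapsto U(t)G(t)$ is the horizontal $O_0$-lift from $U_0$, so by uniqueness it equals $\tilde U(t)$.

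Evaluating at $t=T$ gives, on one hand, $G(T)=U(T)^{-1}\tilde U(T)=U_0^{-1}\tilde U(T)=\sum_n e^{\mathrm{i}\beta_n}|e_n\rangle\langle e_n|$, and on the other $G(T)=\exp\!\big(-\int_0^T\check\Omega_{U(t)}[\frac{dU(t)}{dt}]\,dt\big)$; comparing the $n$-th diagonal entries of these two diagonal unitaries yields $e^{\mathrm{i}\beta_n}=\exp\!\big(-\langle e_n|\int_0^T\check\Omega_{U(t)}[\frac{dU(t)}{dt}]\,dt\,|e_n\rangle\big)$, which is \eqref{eq:q-GeoPhaseExpression} modulo the $2\pi$-ambiguity already present in \eqref{eq:q-GeoPhase}; writing the integral as the line integral $\mathrm{i}\oint_{C_W}U^{-1}\star dU$ makes the independence of the time-parameterization immediate. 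For the Hamiltonian-independence, observe that two closed $O_0$-lifts based at $U_0$ differ by a $\mathcal{U}(1)^d$-valued loop, so their values of $\int_0^T\check\Omega_{U(t)}[\frac{dU(t)}{dt}]\,dt$ differ only by $2\pi\mathrm{i}$ times a diagonal integer operator; hence the right-hand side of \eqref{eq:q-GeoPhaseExpression} is determined by $C_W$ and $O_0$ alone, and any Hamiltonian whose Heisenberg equation \eqref{eq:HeisenbergEquTime} traces out $C_W$ produces such a lift after multiplying $t\mapsto\sum_n|\psi_n(t)\rangle\langle e_n|$ by a $\mathcal{U}(1)^d$-path to close it up.

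For part~(2), any other base point over $O(0)$ is $U_0'=U_0G_0$ with $G_0\in\mathcal{G}_{O_0}$; the transformation law \eqref{eq:GaugeTransConnection} shows that the right action $R_{G_0}$ carries horizontal lifts to horizontal lifts, so the horizontal $O_0$-lift from $U_0'$ is $t\mapsto\tilde U(t)G_0$, and the geometric phases it produces are encoded in $(U_0')^{-1}\tilde U(T)G_0=G_0^{-1}\big(\sum_n e^{\mathrm{i}\beta_n}|e_n\rangle\langle e_n|\big)G_0$. Writing $G_0=\sum_n e^{\mathrm{i}\phi_n}|e_{\sigma(n)}\rangle\langle e_n|$ for some $\sigma\in\Pi(d)$, this conjugation turns the diagonal operator into $\sum_n e^{\mathrm{i}\beta_{\sigma(n)}}|e_n\rangle\langle e_n|$, which has the same multiset of eigenvalues; hence $\{\beta_n:1\le n\le d\}$ is unchanged.

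For part~(3), let $O_0'=\{|e_n'\rangle\langle e_n'|\}$ be a second measurement point and $W\in\mathcal{U}(\mathbb{H})$ the unitary with $W|e_n\rangle=|e_n'\rangle$, so that $T:U\mapsto UW^{-1}$ is the bundle isomorphism $\xi_{O_0}\to\xi_{O_0'}$ of Section~\ref{FibleBundleOb}. A direct computation from \eqref{eq:CanonConnction} gives $\check\Omega'_{U(t)W^{-1}}\big[\frac{d}{dt}(U(t)W^{-1})\big]=W\,\check\Omega_{U(t)}\big[\frac{dU(t)}{dt}\big]\,W^{-1}$, where $\check\Omega$ and $\check\Omega'$ are the canonical connections on $\xi_{O_0}$ and $\xi_{O_0'}$; hence $T$ carries horizontal $O_0$-lifts to horizontal $O_0'$-lifts. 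Applying $T$ to $\tilde U(t)$ yields the horizontal $O_0'$-lift $t\mapsto\tilde U(t)W^{-1}=\sum_n|\tilde\psi_n(t)\rangle\langle e_n'|$ based at $U_0W^{-1}=\sum_n|\psi_n\rangle\langle e_n'|$, and its geometric phases are encoded in $W U_0^{-1}\tilde U(T)W^{-1}=\sum_n e^{\mathrm{i}\beta_n}|e_n'\rangle\langle e_n'|$, again the multiset $\{e^{\mathrm{i}\beta_n}\}$. Combined with part~(2), the set $\{\beta_n\}$ depends on neither $U_0$ nor $O_0$, hence is a set of geometric invariants of $C_W$. I expect the only genuinely non-routine point to be the collapse of the parallel-transport equation \eqref{eq:ParallelTransfConnectionCond} to an ordinary (rather than path-ordered) exponential, which is precisely where the abelianness of $\mathrm{g}_{O_0}$ is used; once that is in hand, the remaining work is careful bookkeeping of the $2\pi$-ambiguities --- managed throughout by tracking multisets of eigenvalues rather than ordered tuples --- together with repeated appeals to the uniqueness of horizontal lifts.
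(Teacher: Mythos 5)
Your proposal is correct and follows essentially the same route as the paper: part (1) via the gauge factor $G(t)$ from Proposition \ref{prop:ParallelTransf} together with the diagonality (abelianness) of $\mathrm{g}_{O_0}$, part (2) via the right action of $G_0\in\mathcal{G}_{O_0}$ permuting the phases, and part (3) via the bundle isomorphism $U\mapsto UW^{-1}$ transporting horizontal lifts. The only (welcome) refinements are that you solve \eqref{eq:GeodesicEquaGaugeTransf} by an ordinary exponential and compare endpoints rather than integrating diagonal matrix elements directly, and that in (3) you check explicitly that the pushforward of $\check{\Omega}$ is the canonical connection on $\xi_{O_0'}$, a point the paper leaves implicit.
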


\begin{remark}\label{rk:q-ObsGeoPhase}\rm
In the definition \eqref{eq:q-GeoPhase}, the $\beta_n$'s are in fact independent of the choice of measurement points $O_0$ and background geometry over the fiber bundle $\xi_{O_0}.$
\end{remark}

\begin{proof}
(1).\; Let $C_P: [0, T] \ni t \longmapsto U (t) \in \mathcal{F}^{O(t)}_{O_0}$ be a smooth $O_0$-lift of $C_W$ such that $U(T) = U(0)=U_0.$ By Proposition \ref{prop:ParallelTransf} we have $\tilde{U} (t) = U(t) G(t)$ for every $0 \le t \le T,$ where $G(t)$ satisfies \eqref{eq:GeodesicEquaGaugeTransf} with $\Gamma (t) = U(t).$ Then there are continuously differential functions $\psi_n$ with $\psi_n(0)=0$ and $\psi_n (T) = \beta_n$ such that $G(t) | e_n \rangle = e^{\mathrm{i} \psi_n (t)} | e_n \rangle$ for all $n \ge 1.$ By \eqref{eq:q-ConnectionVertTangVect} and \eqref{eq:GaugeTransConnection}, one has
$$
\int^T_0 \langle e_n | \check{\Omega}_{U(t)} \Big [ \frac{d U(t)}{d t}\Big ] | e_n \rangle d t= \int^T_0 \langle e_n | G(t) \frac{d G(t)^{-1}}{d t} | e_n \rangle d t= - \mathrm{i} \beta_n.
$$
This follows \eqref{eq:q-GeoPhaseExpression}.

(2).\; For any $\check{U}_0 \in \mathcal{F}^{O(0)}_{O_0}$ there exists some $G = \sum^d_{n=1} e^{\mathrm{i} \theta_n}| e_{\sigma (n)} \rangle \langle e_n | \in \mathcal{G}_{O_0}$ with $\sigma \in \Pi (d)$ and $\theta_n \in \mathbb{R}$ such that $\check{U}_0 = U_0 G.$ Then $\check{C}_P: [0, T] \ni t \longmapsto \check{U}(t) = \tilde{U} (t) G$ is the horizontal $O_0$-lift of $C_W$ with the starting point $\check{U}(0) = U_0 G$ such that $\check{U}(T) |e_n\rangle = e^{\mathrm{i} \beta_{\sigma (n)}} \check{U} (0) |e_n\rangle$ for all $1 \le n \le d.$ Thus, the set $\{\beta_n: n \ge 1 \}$ is invariant for any starting point $U_0 \in \mathcal{F}^{O(0)}_{O_0}.$ Combining this fact with \eqref{eq:q-GeoPhaseExpression} yields
$$
\{\beta_n: n \ge 1 \} = \Big \{ \mathrm{i} \langle e_n | \oint_{\bar{C}_P} \check{\Omega}_{\bar{U}} [ d \bar{U} ] | e_n \rangle:\; n=1, \ldots, d \Big \}
$$
for any closed smooth $O_0$-lift $\bar{C}_P$ of $C_W.$ Therefore, the observable-geometric phases are independent of the choice of the starting point and only depends on the geometry of the curve $C_W$ with respect to the $O_0$-connection $\check{\Omega}.$

(3).\; Let $\tilde{C}_P: [0, T] \ni t \longmapsto \tilde{U} (t)$ be the horizontal $O_0$-lift of $C_W$ with respect to $\Omega$ with the starting point $\tilde{U} (0) = U_0.$ For any $O'_0 = \{ | e_n' \rangle \langle e_n'|: 1 \le n \le d \} \in \mathcal{W} (\mathbb{H})$ there exists some $U' \in \mathcal{U} (\mathbb{H})$ such that $O'_0= U' O_0 U'^{-1}$ with $| e_n' \rangle = U' | e_n \rangle$ for $n= 1, \ldots, d.$ Then $\Omega' = \{\Omega'_P: P \in \mathcal{U} (\mathbb{H}) \}$ is a $O'_0$-connection on $\xi_{O'_0},$ where $\Omega'_P (Q) = U' \Omega_{P U'} (Q U') U'^{-1}$ for any $P \in \mathcal{U} (\mathbb{H})$ and for all $Q \in T_P \xi_{O'_0} (\mathbb{H}).$ By computation, we conclude that $\tilde{C}'_P: [0, T] \ni t \longmapsto \tilde{U}' (t) = \tilde{U} (t) U'^{-1}$ is the horizontal $O'_0$-lift of $C_W$ with respect to $\Omega'$ with the starting point $\tilde{U}' (0) = U_0 U'^{-1}.$ Therefore,
$$
\tilde{U}' (T) | e_n' \rangle = \tilde{U} (T) U'^{-1} | e_n' \rangle = \tilde{U} (T) | e_n \rangle = e^{\mathrm{i} \beta_n} \tilde{U} (0) | e_n \rangle = e^{\mathrm{i} \beta_n} \tilde{U}' (0) | e_n' \rangle,
$$
and hence the set of the geometric phases of $C_W$ with respect to $\Omega'$ is the same as that of $\Omega.$
\end{proof}

\

{\it Acknowledgments}\; This work is partially supported by the Natural Science Foundation of China under Grant No.11871468, and also in part supported by MOST under Grant No. 2017YFA0304500.

\bibliography{apssamp}

\end{document}